\newtheorem{Theorem}{Theorem}[section]
\newtheorem{Proposition}{Proposition}[section]
\newtheorem{Lemma}{Lemma}[section]
\def\proof{\par{\it Proof}. \ignorespaces}
\def\endproof{{\ \vbox{\hrule\hbox{%
     \vrule height1.3ex\hskip0.8ex\vrule}\hrule }}\par}
\newenvironment{Proof}{\proof}{\endproof}
\newtheorem{Remark}[Theorem]{Remark}
\newlength{\myVSpace}
\newlength{\bigmyVSpace}
\newcommand\bigxstrut{\raisebox{\bigmyVSpace}
  {\rule{0pt}{\bigmyVSpace}}%
}
\begin{document}


\title{The Pfaff lattice on symplectic matrices}

\author{Yuji Kodama$^1$ and Virgil U. Pierce$^2$}
\address{$^1$ Department of Mathematics, Ohio State University, Columbus,
OH 43210}
\ead{kodama@math.ohio-state.edu}
\address{$^2$ Department of Mathematics, University of Texas -- Pan American, Edinburg, TX  78539}
\ead{piercevu@utpa.edu}


\begin{abstract}
The Pfaff lattice is an integrable system arising from the SR-group
factorization in an analogous way to how the Toda lattice arises
from the QR-group factorization.  In our recent paper [{\it Intern.
Math. Res. Notices}, (2007) rnm120], we studied the Pfaff lattice hierarchy
for the case where the Lax matrix is defined to be a lower Hessenberg matrix.
 In this paper we deal with the case of a symplectic lower Hessenberg Lax matrix, this forces the Lax matrix to take a tridiagonal shape.
We then show that
the odd members of the Pfaff lattice hierarchy are trivial, while the even members
are equivalent to the indefinite Toda lattice hierarchy defined in
[Y. Kodama and J. Ye, {\it Physica D}, {\bf 91} (1996) 321-339].
This is analogous to the case of the Toda lattice hierarchy in the relation to the
Kac-van Moerbeke system.
 In the case with initial matrix having only real or imaginary eigenvalues, the fixed points of the even flows are given by $2\times 2$ block diagonal matrices with zero diagonals.
 We also consider a family of skew-orthogonal polynomials with symplectic recursion relation related to the Pfaff lattice, and find that they are succinctly expressed in terms of orthogonal polynomials appearing in the indefinite Toda lattice.
\end{abstract}

\ams{37K10 (37K20, 37K60, 82B23)}

\maketitle


\pagestyle{empty}
\tableofcontents



\section{Introduction}

The Toda lattice hierarchy of nonlinear differential equations may be formulated as the continuous limit of the QR-algorithm for diagonalizing a symmetric matrix \cite{rutishauser:58, deift:83, symes:82}.  There exists a number of generalizations of the QR-algorithm for diagonalizing matrices while preserving different symmetries, the HR and SR-algorithms are two that we will consider in this paper \cite{benner:97, bunse:86B, Fa93, rutishauser:58}.  In particular the SR-algorithm relies on SR-factorizations whose goal is to factor an invertible matrix into a symplectic matrix and an upper triangular matrix \cite{bunse:86}.  The Pfaff lattice hierarchy was recently introduced to describe the partition functions of the orthogonal and symplectic ensembles of random matrices, and equivalently, to give the evolution of skew-orthogonal polynomials (see \cite{adler:99, adler:02}).  The Pfaff lattice hierarchy may be viewed as a continuous limit of the SR-algorithm for diagonalizing a lower Hessenberg matrix, and in fact was originally discovered in this context \cite{watkins:88}.  We showed in our recent work \cite{kodama:07} that the Pfaff lattice is an integrable system in the Arnold-Liouville sense, and in the case that the initial conditions are not lower Hessenberg matrices from the symplectic algebra, $\mathfrak{sp}(n)$,  we showed that the fixed points of the Pfaff flows with an initial condition having real distinct eigenvalues are given by $2\times 2$ block diagonal matrices with zero diagonals.

In this paper we will consider the Pfaff lattice flow with an initial condition given by a matrix from the symplectic algebra $\mathfrak{sp}(n)$.    The Lie algebra $\mathfrak{sp}(n)$ is the set of $2n \times 2n$ matrices satisfying $J L + L^T J = 0$ with
\[ J = \mbox{diag}_2\left(\,
\left( \begin{array}{cc} 0&1 \\ -1 &0 \end{array} \right),\,\dots,\, \left(\begin{array}{cc} 0 & 1 \\-1 & 0 \end{array}\right) \,\right) = I_n \otimes \left( \begin{array}{cc} 0 & 1 \\ -1 & 0 \end{array}\right) \,. \]
Here diag$_2(A)$ denotes the $2\times 2$ block diagonal matrix of $A$, i.e. ${\rm diag}_2(A) = {\rm diag}_2(A_1, \ldots, A_2)$ with the $2\times 2$ blocks $A_j$ in the diagonal of $A$, and $I_n$ is the $n\times n $ identity matrix.
\begin{Remark}
Matrices in the symplectic algebra, $\mathfrak{sp}(n)$, are called Hamiltonian in the linear algebra and
 control theory literature, we will avoid this nomenclature to avoid confusion with the Hamiltonians
 which are constants of motion and appear throughout the paper.  The term symplectic algebra is also
 the one commonly used in the literature on the Pfaff lattice and in representation theory \cite{fulton:91}.
\end{Remark}
The spectrum of a real symplectic matrix comes in three types:
\begin{itemize}

\item[(a)] Pairs of real eigenvalues $(z, -z)$.

\item[(b)] Paris of imaginary eigenvalues $(z, -z)$.

\item[(c)] Or quadruples of complex eigenvalues $(z, \bar{z}, -
  z, -\bar{z})$.

\end{itemize}
Symplectic matrices appear in a number of control theory problems most
 notably in
 solving the real algebraic Riccati equation (see e.g. \cite{Fa00}).
  Our $J$ is a permutation of the traditional $J$, say $\tilde{J}$, used in the
  literature on the Hamiltonian eigenvalue problem, i.e.
  $\tilde{J}=\left(\begin{array}{cc} 0&1\\-1&0\end{array}\right) \otimes I_n$.  We use this $J$
  because of its connection to the Pfaff lattice which is defined on
  lower Hessenberg matrices (see below).

The Pfaff lattice
 can be viewed as an $\mathfrak{sp}$-version of the
Toda lattice, it has the following form with $\mathfrak{sp}$-projection \cite{adler:99, kodama:07},
\[
\frac{\partial L}{\partial t}=[\pi_{\mathfrak{sp}}(\nabla H),\, L]\,.
\]
The Pfaff lattice is associated with the
Lie-Poisson structure induced by the splitting $\mathfrak{sl}(2n) =
\mathfrak{k} \oplus \mathfrak{sp}(n)$. Recall that the splitting
gives the following projections of an element
$X\in{\mathfrak{sl}}(2n,{\mathbb R})$,\\
\begin{eqnarray} \label{projection1}
\pi_{\mathfrak k}(X)&=X_--J\left(X_+\right)^TJ+\frac{1}{2}(X_0-JX_0^TJ)\\
\label{projection2}
\pi_{\mathfrak{sp}}(X)&=X_++J\left(X_+\right)^TJ+\frac{1}{2}(X_0+JX_0^TJ)
\end{eqnarray} \\
where $X_0$ is the $2\times 2$ block diagonal part of $X$, $X_+$ (resp. $X_-$) is
the $2\times 2$ block upper (resp. lower) triangular part of $X$. In particular, we have
\begin{equation}
\label{mathfrak_k}
 \mathfrak{k} = \left\{ \left( \begin{array}{cccc}
 h_1 I_2 & 0_2 &  \dots & 0_2 \\
\star & h_2 I_2  & \dots & 0_2 \\
   \vdots &   \vdots  & \ddots  & \vdots   \\
\star      &   \star      &  \cdots   & h_n I_2 \end{array}\right)\,: \,
\sum_{k=1}^n h_k = 0 \right\} \,, \end{equation} where $I_2$ is the
$2\times 2$ identity matrix, and $0_2$ is the $2\times 2$ zero
matrix.

As in the case of the Toda lattice hierarchy, one can also define
 the Pfaff lattice hierarchy
 (see \cite{kodama:07}, and also Section \ref{pfaff}),
\begin{equation}\label{Pfaff}
 \frac{\partial L}{\partial t_j} = \left[ \pi_{\mathfrak{sp}}(\nabla H_j), L\right]\, \quad{\rm for}\quad j=1,\ldots,2n-1\,,
\end{equation}
where $H_j=\frac{1}{j+1}{\rm tr}(L^{j+1})$ and $L$ is a Hessenberg matrix given by,
for example with $n=3$,
\begin{equation}\label{L}
L=\left( \begin{array}{cccccc}
0 &  c_1  &   0 & 0 & 0 & 0 \\
** & b_1 & a_1 & 0 & 0 & 0 \\
** & * & -b_1& c_2 & 0 & 0 \\
** & * & * & b_2 & a_2 & 0\\
** & * & * & * &-b_2 & c_3 \\
** & * & * & * & * & 0
\end{array}\right) \,.
\end{equation}
The $c_k$ are  invariants of the Pfaff lattice, in the traditional Pfaff lattice consider by Adler et. al. \cite{adler:99} the $c_k$ are chosen to be $1$.  It was shown in \cite{kodama:07} that the $c_k$ are Casimirs of the Pfaff lattice.
In \cite{kodama:07} it was shown that for $L$ with distinct
eigenvalues
 the Pfaff lattice hierarchy is
an integrable system in the Arnold-Liouville sense. In addition it
was shown that if $L$ is not symplectic and has distinct real
eigenvalues, then as $t_j \to \infty$ for any $j$, there is a
diagonal matrix $P(t_j)$ such that the normalized matrix $
\hat{L}(t_j):=P(t_j) L(t_j) P(t_j)^{-1}$
approaches a $2\times 2$ block upper triangular matrix, where the
diagonal blocks are sorted by the size of $z^j$ for the real
eigenvalues $z$ appearing in each.
In this paper, we deal with the Pfaff lattice when the initial matrix is
symplectic, and discuss the dynamical structure in a connection
with the Toda lattice.

A natural question at this point is then what happens when we
restrict the Pfaff lattice variables to matrices which are both
symplectic and lower Hessenberg.  The first consequence is that such
matrices have the compact form:
\begin{equation}\label{BG.L0form}
L_0=
\left( \begin{array}{ccccccc}
\begin{array}{cc} 0 & c_1 \\ d_1 & 0 \end{array} &\vline&
\begin{array}{cc} 0  & 0 \\ a_1 & 0 \end{array} &\vline& \cdots &\vline& 0_2
\\\hline\bigxstrut
\begin{array}{cc} 0 & 0 \\ a_1 & 0 \end{array} &\vline&
\begin{array}{cc} 0 & c_2 \\ d_2 & 0 \end{array} &\vline&\cdots &\vline& 0_2
\\\hline\bigxstrut
\vdots &\vline& \vdots &\vline& \ddots &\vline& \vdots
\\\hline\bigxstrut
0_2 & \vline& 0_2 &\vline& \cdots &\vline &\begin{array}{cc} 0 & c_n \\ d_n & 0\end{array}
\end{array}\right) \,,
\end{equation}
that is, $L_0$ has a $2\times 2$ block tridiagonal form
$ L_0 = (l_{i,j})_{1\le i,j\le n} $ with
 $2\times 2$ block matrices $l_{i,j}$ having $l_{i,j}=0_2$ for $|i-j|>1$ and
\[
l_{k,k}=\left(\begin{array}{cc} 0&c_k \\d_k & 0\end{array}\right),\quad
l_{k,k+1}=l_{k+1,k}=\left(\begin{array}{cc} 0&0\\a_k&0\end{array}\right)\,,
\]
where $c_k = \pm 1$. We call a matrix in this form
$H^S$-tridiagonal.  This is analogous to restricting the Toda
lattice variables to a matrix which is tridiagonal and
skew-symmetric (i.e. in $\mathfrak{so}(n)$); a situation we will
remark on in more detail in Section \ref{pfaffToda}.

We will show that the even Pfaffian flow of an $H^S$-tridiagonal matrix
converges to a sequence of block diagonal matrices with the
following shapes:
\begin{itemize}

\item[(a)] a $2\times 2$ block for each real pair of eigenvalues $(z,-z)$,

\item[(b)] a $2\times 2$ block for each imaginary pair of eigenvalues $(z,-z)$,

\item[(c)] a $4 \times 4$ block for each quadruple of complex eigenvalues $(z,\bar{z},-z,-\bar{z})$.

\end{itemize}
In addition, for the $t_{2j}$ flow, the blocks are sorted by the
size of $\mbox{Re}( z^{2j})$.  In the appendix, we will give a
simple proof that the SR-algorithm is given by the integer
evaluation of the flow generated by the Pfaff lattice introduced in
\cite{adler:99} (see also \cite{kodama:07}).


The present paper is organized as follows:

 In Section \ref{SRalgorithm} we review the SR-factorization which is closely associated with the
 Pfaff lattice hierarchy of equations.
 We then show, in Section \ref{pfaff}, that the odd members of the hierarchy of the Pfaff lattice
 (\ref{Pfaff}) on
 symplectic matrices are trivial, while the even members of the hierarchy on  $H^S$-tridiagonal
 matrices are equivalent to the indefinite Toda hierarchy introduced in
\cite{KoYe96} (Theorem \ref{TP}).
 This is similar to the case of the Toda lattice in the connection with
 the Kac-van Moerbeke system.
In contrast with the symmetric Toda lattice hierarchy, the indefinite Toda hierarchy experiences
 blow ups, where some
 entries of the matrix approach infinity.  Our result then implies that for
 generic $H^S$-tridiagonal matrices the Pfaff lattice has blow ups.

 In Section \ref{pfaffToda}, we give a further discussion on the equivalence
between the Pfaff lattice on $H^S$-tridiagonal matrix and the indefinite Toda lattice defined in
\cite{KoYe96}.  We begin by reviewing the indefinite Toda lattice,
then show that the $\tau$-functions for the Pfaff lattice are the same
as those for the indefinite Toda lattice (Theorem \ref{PTtau}).
 We then examine the families of
skew-orthogonal polynomials appearing in the Pfaff lattice for
an $H^S$-tridiagonal matrix.  We show that these polynomials
are related to the
orthogonal polynomials in the indefinite Toda lattice
(Theorem \ref{SO-O relation}).
Finally, we remark on the
asymptotic behavior of the even Pfaff lattice flows of an
$H^S$-tridiagonal matrix.  In particular we show that initial conditions with
complex eigenvalues, or with complex eigenvectors, will result in a blow up.

\section{Background on SR-factorization}\label{SRalgorithm}

The Pfaff lattice equation is intimately connected to the SR-factorization of an invertible matrix and so we collect here some relevant facts on this factorization.
With the Lie subalgebra $\mathfrak{k}$ given by (\ref{mathfrak_k}) in the Lie algebra
splitting $\mathfrak{sl}(2n)=\mathfrak{k}\oplus\mathfrak{sp}(n)$,
we define $G_{\mathfrak{k}}$ to be the Lie group with Lie
algebra $\mathfrak{k}$,
\[ G_{\mathfrak{k}}:=\left\{~
\left( \begin{array}{cccc}
\alpha_1 I_2 & 0_2 &  \cdots & 0_2 \\
\star & \alpha_2 I_2 &  \cdots & 0_2 \\
  \vdots    &  \vdots       & \ddots &   \vdots  \\
  \star    &   \star      & \cdots  & \alpha_n I_2 \end{array}\right)\, :\, \prod_{j=1}^n\alpha_j=1~\right\}\,.
\]
We also define the group $\tilde{G} \supset G_{\mathfrak{k}}$ to be the
group of real invertible lower $2\times 2$ block triangular matrices
with {\em free} invertible diagonal blocks, that is, $\tilde{G}$ is a parabolic subgroup of lower
$2\times 2$ block matrices of $SL(2n,\mathbb{R})$.

We define the Pfaffian of a skew-symmetric matrix $m$ by the recursive formula
\[ \mbox{pf}(m) = \sum_{j=1}^{2n} (-1)^{i+j+1} m_{ij}
\mbox{pf}(m_{\hat{i}\hat{j}}) \,,\] where $m_{\hat{i}\hat{j}}$ is
found by deleting the $i$-th and $j$-th rows and columns of $m$,
where $i$ is any row, and the initial value for the recursion is
given by
\[ \mbox{pf} \left(\begin{array}{cc} 0 & m_{12} \\ -m_{12} & 0
\end{array}\right)  = m_{12} \,.\]
Then, for example,
with the skew symmetric matrix
\[ m = \left(\begin{array}{cccc} 0 & m_{12} & m_{13} & m_{14} \\
                         & 0      & m_{23} & m_{24} \\
                         &        & 0      & m_{34} \\
                         &        &        & 0 \end{array}\right)
\]
we have
\[ \mbox{pf}(m) = m_{12} m_{34} - m_{13} m_{24} + m_{14} m_{23}\,. \]
Throughout this paper, we leave blank the lower triangular part of skew-symmetric matrices.

We will use the
SR-factorization of
$g \in SL(2n, \mathbb{R})$  introduced by Bunse-Gerstner
(Theorem 3.8 in \cite{bunse:86}, see also \cite{Bu82}):
%
%
\begin{Theorem}\label{tautheorem}
Let $g\in SL(2n, \mathbb{R})$ and $M = g J g^T$. Then
the factorization
\[ g = r s\, \]
with $r \in \tilde{G} $ and $s\in Sp(n, \mathbb{R})$
exists if and only if the Pfaffian of the $2k\times 2k$ upper left submatrix of $M$,
denoted by $M_{2k}$, does not vanish, i.e.
\[
  {\rm pf}(M_{2k}) \neq 0\,,\qquad 1\leq k \leq n-1\,.
\]
\end{Theorem}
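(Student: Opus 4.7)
The plan is to reduce the SR-factorization to a congruence factorization $M = rJr^T$ of the skew-symmetric matrix $M := gJg^T$, using the identity $s\in Sp(n,\mathbb{R}) \Leftrightarrow sJs^T = J$. If $g = rs$ with $s\in Sp(n,\mathbb{R})$, then $M = rJr^T$; conversely, given such an $r$ one sets $s := r^{-1}g$ and checks that $sJs^T = r^{-1}Mr^{-T} = J$. The theorem therefore reduces to characterizing when the skew-symmetric $M$ admits a congruence factorization $M = rJr^T$ with $r \in \tilde{G}$, and the unifying tool throughout is the Pfaffian identity $\mathrm{pf}(AXA^T) = \det(A)\,\mathrm{pf}(X)$ for any matrix $A$ and skew-symmetric $X$.

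For the forward direction, suppose $M = rJr^T$ with $r$ lower $2\times 2$ block triangular, and write $Y_{2k}$ for the leading $2k\times 2k$ submatrix of $Y$. Because $r$ is block lower triangular one has $M_{2k} = r_{2k}J_{2k}r_{2k}^T$, and since $J_{2k}$ is block diagonal with $k$ copies of $J_2$ one gets $\mathrm{pf}(J_{2k}) = 1$. The Pfaffian identity then yields $\mathrm{pf}(M_{2k}) = \det(r_{2k}) = \prod_{j=1}^{k}\det(r_{jj})$, which is nonzero since the diagonal blocks of $r\in\tilde{G}$ are invertible.

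For the converse I would argue by induction on $n$, performing a block skew-symmetric LU-style elimination at each stage. Partition $M = \left(\begin{array}{cc} A & -B^T \\ B & C \end{array}\right)$ with $A = M_2$; since $\mathrm{pf}(M_2)\ne 0$, the block $A$ is invertible and skew-symmetric, so $A^{-T} = -A^{-1}$. A direct block computation gives
\[
M = \left(\begin{array}{cc} I_2 & 0 \\ BA^{-1} & I \end{array}\right)
    \left(\begin{array}{cc} A & 0 \\ 0 & \widetilde{M} \end{array}\right)
    \left(\begin{array}{cc} I_2 & 0 \\ BA^{-1} & I \end{array}\right)^{T}, \qquad
\widetilde{M} := C + BA^{-1}B^T,
\]
with $\widetilde{M}$ skew-symmetric of size $2(n-1)$. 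Applying the forward identity to the $(2k+2)\times(2k+2)$ leading blocks of this equation yields $\mathrm{pf}(M_{2k+2}) = \mathrm{pf}(M_2)\,\mathrm{pf}(\widetilde{M}_{2k})$ for $1 \le k \le n-2$, so $\widetilde{M}$ inherits the nonvanishing Pfaffian condition, and by induction there is a $\tilde{r}$ in the corresponding parabolic group with $\widetilde{M} = \tilde{r}\,J_{2(n-1)}\,\tilde{r}^T$. Choosing $r_1 = \mathrm{diag}(\mathrm{pf}(A),1)$ so that $r_1 J_2 r_1^T = A$, and assembling $r := \left(\begin{array}{cc} r_1 & 0 \\ BA^{-1}r_1 & \tilde{r} \end{array}\right)$, one checks $rJr^T = M$; all diagonal blocks are invertible, and telescoping the Pfaffian ratios gives $\det(r) = \mathrm{pf}(M_{2n}) = \det(g)\,\mathrm{pf}(J) = 1$, placing $r$ in $\tilde{G}\subset SL(2n,\mathbb{R})$.

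The main obstacle is the block Schur-complement step and its Pfaffian bookkeeping. I would handle this by applying the forward Pfaffian identity to the partial factorization displayed above: this single move simultaneously supplies the ratio formula $\mathrm{pf}(\widetilde{M}_{2k}) = \mathrm{pf}(M_{2k+2})/\mathrm{pf}(M_2)$ that feeds the induction, and the telescoping identity $\det(r) = \mathrm{pf}(M_{2n}) = 1$ that secures membership in the parabolic subgroup $\tilde{G}$.
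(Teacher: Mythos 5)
The paper does not actually prove this statement: it is imported verbatim as Theorem~3.8 of Bunse-Gerstner \cite{bunse:86}, so there is no in-paper argument to compare against. Your proof is correct and self-contained, and it is the natural one: reduce $g=rs$ to the congruence problem $M=rJr^T$ via $sJs^T=J$, get the ``only if'' direction from $M_{2k}=r_{2k}J_{2k}r_{2k}^T$ together with ${\rm pf}(AXA^T)=\det(A)\,{\rm pf}(X)$ (the same identity the paper itself deploys later, in Lemma~\ref{sr} and Theorem~\ref{PTtau}), and get the ``if'' direction by a skew-symmetric block Cholesky/Schur-complement induction. I checked the details: the $(1,2)$ block of your partial factorization is $AA^{-T}B^T=-B^T$ as required, the Schur complement $\widetilde{M}=C+BA^{-1}B^T$ is skew, the ratio formula ${\rm pf}(\widetilde{M}_{2k})={\rm pf}(M_{2k+2})/{\rm pf}(M_2)$ is right, and the assembled $r$ satisfies $rJr^T=M$ with $\det(r)={\rm pf}(M)=\det(g)=1$.

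One phrasing to repair: the inductive hypothesis cannot be ``$\tilde{r}$ lies in the corresponding parabolic subgroup,'' because that subgroup sits inside $SL(2(n-1),\mathbb{R})$ while $\det(\tilde{r})={\rm pf}(\widetilde{M})={\rm pf}(M_{2n})/{\rm pf}(M_2)$ is generally not $1$; likewise $\widetilde{M}$ is not presented as $\tilde{g}J\tilde{g}^T$ for some $\tilde{g}\in SL$. State the induction for an arbitrary invertible skew-symmetric $N$ with ${\rm pf}(N_{2k})\neq 0$, producing a block lower triangular $\tilde{r}$ with invertible diagonal blocks and $\det(\tilde{r})={\rm pf}(N)$, and impose the determinant-one normalization only at the top level, exactly as your final telescoping computation already does. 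With that adjustment the argument is complete, and it also exhibits why the paper's permuted $J=I_n\otimes\left(\begin{smallmatrix}0&1\\-1&0\end{smallmatrix}\right)$ is the right choice here: it is block diagonal in $2\times 2$ blocks, which is what makes $M_{2k}=r_{2k}J_{2k}r_{2k}^T$ hold for block lower triangular $r$.
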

As a consequence there is a dense set of matrices in $SL(2n,
\mathbb{R})$ for which this decomposition exists.
If we restrict the factorization to using elements of $G_\mathfrak{k}
\subset \tilde{G}$ the Theorem becomes:
\begin{Theorem} \label{SRtheorem}
Let $g \in SL(2n, \mathbb{R})$ and $M=g J g^T$.  Then the
factorization
\[ g = r s\,\]
with $r\in G_{\mathfrak{k}}$ and $s\in Sp(n, \mathbb{R})$ exists if
and only if all the Pfaffians ${\rm pf}(M_{2k})$ satisfy the positivity condition,
\[
{\rm pf}(M_{2k}) > 0\,,  \qquad 1\leq k \leq n-1\,.
\]
\end{Theorem}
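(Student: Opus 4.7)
The plan is to reduce Theorem \ref{SRtheorem} to Theorem \ref{tautheorem} by tracking how the Pfaffians $\mathrm{pf}(M_{2k})$ encode the diagonal $2\times 2$ blocks of the $\tilde{G}$-factor, and then adjusting by a block diagonal symplectic matrix to bring the diagonal blocks into the special form $\alpha_j I_2$.

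First I would handle the (easier) forward implication. Suppose $g = rs$ with $r\in G_{\mathfrak{k}}$ and $s\in Sp(n,\mathbb{R})$. Since $sJs^T=J$, we have $M=gJg^T = rJr^T$. Because $r$ is lower $2\times 2$ block triangular, the upper left $2k\times 2k$ corner factors as $M_{2k}=r_{2k}J_{2k}r_{2k}^T$, where $J_{2k}$ is block diagonal with blocks $\bigl(\begin{smallmatrix}0&1\\-1&0\end{smallmatrix}\bigr)$, so $\mathrm{pf}(J_{2k})=1$. The identity $\mathrm{pf}(ABA^T)=\det(A)\,\mathrm{pf}(B)$ then gives
\[
\mathrm{pf}(M_{2k}) \;=\; \det(r_{2k}) \;=\; \prod_{j=1}^{k}\alpha_j^2 \;>\;0,
\]
which proves positivity.

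For the converse I would invoke Theorem \ref{tautheorem}: the hypothesis $\mathrm{pf}(M_{2k})>0$ in particular gives non-vanishing, hence produces a factorization $g=r's'$ with $r'\in\tilde{G}$ and $s'\in Sp(n,\mathbb{R})$. Let $R'_1,\dots,R'_n$ denote the diagonal $2\times 2$ blocks of $r'$; the same computation as above yields
\[
\mathrm{pf}(M_{2k}) \;=\; \det(r'_{2k}) \;=\; \prod_{j=1}^{k}\det R'_j,
\]
so inductively $\det R'_k = \mathrm{pf}(M_{2k})/\mathrm{pf}(M_{2k-2})>0$ for every $k$. This positivity is exactly what I need to extract real square roots. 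Set $\alpha_j:=\sqrt{\det R'_j}>0$ and $H_j:=\alpha_j^{-1}R'_j\in SL(2,\mathbb{R})$, and define the block diagonal matrix $h:=\mathrm{diag}_2(H_1,\dots,H_n)$. Since each $H_j$ lies in $SL(2,\mathbb{R})=Sp(1,\mathbb{R})$, a direct check shows $hJh^T=J$, so $h\in Sp(n,\mathbb{R})$. Then $r:=r'h^{-1}$ is still lower block triangular, with diagonal blocks $R'_j H_j^{-1}=\alpha_j I_2$; and $s:=hs'$ remains symplectic. The factorization $g=rs$ is the desired one once I verify $r\in G_{\mathfrak{k}}$, i.e.\ $\prod_{j=1}^n\alpha_j=1$.

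The main technical point, and the one place the positivity hypothesis is genuinely used rather than just non-vanishing, is precisely this last product. Because $r'\in\tilde{G}\subset SL(2n,\mathbb{R})$ we have $\prod_j\det R'_j=\det(r')=1$, and taking positive square roots gives $\prod_j\alpha_j=+1$ (not $-1$). Without $\mathrm{pf}(M_{2k})>0$ some $\det R'_j$ could be negative, forcing $\alpha_j$ to be imaginary, and the construction would leave $G_{\mathfrak{k}}$. Apart from this sign bookkeeping, every step is essentially forced by the formulas, so the main obstacle is really conceptual rather than computational: correctly identifying that the passage from $\tilde{G}$ to $G_{\mathfrak{k}}$ is controlled by extracting square roots of $\det R'_j$, and that these square roots exist in $\mathbb{R}$ exactly under the stated positivity condition.
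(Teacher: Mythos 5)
Your argument is correct, but note that the paper does not actually prove this statement: it quotes Theorem \ref{SRtheorem} as the restriction to $G_{\mathfrak{k}}$ of Bunse-Gerstner's Theorem 3.8 (Theorem \ref{tautheorem} here), with a pointer to \cite{bunse:86}. What you have written is a clean, self-contained derivation of the restricted statement from the unrestricted one, and it formalizes exactly the mechanism the paper only gestures at in the surrounding discussion: the forward direction is the computation $M=rJr^T$, $\mathrm{pf}(M_{2k})=\det(r_{2k})=\prod_{j\le k}\alpha_j^2$ that reappears in the proof of Lemma \ref{sr}, and the converse is the observation (made informally in the paper's Remark 3.9 discussion) that one may pass a factor from $\tilde{G}\cap Sp(n,\mathbb{R})=\{\mathrm{diag}_2(A_1,\dots,A_n):\det A_j=1\}$ between $r'$ and $s'$; your choice $H_j=(\det R'_j)^{-1/2}R'_j$ is precisely such a factor, legitimate because $\det R'_k=\mathrm{pf}(M_{2k})/\mathrm{pf}(M_{2k-2})>0$, with the top case $k=n$ handled by $\det r'=1$. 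The one point worth stating explicitly, since your induction as written appeals to $\mathrm{pf}(M_{2n})$ which is not in the hypothesis, is that $\mathrm{pf}(M_{2n})=\det(g)\,\mathrm{pf}(J)=1$ automatically for $g\in SL(2n,\mathbb{R})$, so positivity of $\det R'_n$ comes for free; you do in effect use this when you invoke $\prod_j\det R'_j=\det(r')=1$ to get $\prod_j\alpha_j=+1$. With that small clarification the proof is complete and arguably more informative than the citation the paper relies on, since it isolates where positivity (as opposed to mere non-vanishing) enters: in the existence of real square roots of the block determinants.
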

This implies that there is an open set of matrices in $SL(2n,
\mathbb{R})$ for which this decomposition exists.

Theorem 3.7 in \cite{bunse:86} shows that the set of matrices in
$SL(2n, \mathbb{C})$ with SR-factorization is neither open nor dense in $SL(2n,
\mathbb{C})$.  For this reason we will restrict our consideration to
the real groups for now.

In Remark 3.9 in \cite{bunse:86} it is noted that for the choice of
subgroups of $SL(2n, \mathbb{R})$ in that paper the decomposition
$g=rs$ is not unique,
as one may pass a factor from
\[ \tilde{G} \cap Sp(n, \mathbb{R}) = \left\{ \mbox{diag}_2\left( A_1,
A_2, \dots, A_n\right)\, : \det(A_j) = 1\,~~\forall j\,
 \right\}\,,
\]
between $r$ and $s$. In the case of $SR$ factorization with $R$
restricted to $G_{\mathfrak{k}}$ we find that factorizations are
unique up to a factor from
\[ G_{\mathfrak{k}} \cap Sp(n, \mathbb{R}) = \left\{ \mbox{diag}_2\left(
\pm I_2, \pm I_2, \dots, \pm I_2\right) \right\}\,.\] One notes that
$G_{\mathfrak{k}}$ is in fact made up of $2^n$ connected components,
and if we further restrict $R$ to the component of
$G_{\mathfrak{k}}$ containing the identity matrix we obtain unique
factorizations. Equivalently we obtain unique factorizations by
asking that the diagonal elements of $r$ be positive.  In contrast
the group $\tilde{G}$ is connected.

The Pfaffians defined in Theorem \ref{tautheorem} are also called the $\tau$-functions,
which play the fundamental role in the Pfaff lattice hierarchy \cite{adler:99, kodama:07}, i.e.
for $M=gJg^T$ and $M_{2k}$, the $2k\times 2k$ upper left submatrix of $M$,
\begin{equation}\label{tau}
\tau_{2k}:={\rm pf}\,\left(M_{2k}\right),\qquad k=0,1,\ldots, n\,,
\end{equation}
with $\tau_0=1$. Derivatives of the $\tau$-functions for
$g(\mathbf{t}) := \exp\left( \sum_{j=1}^{2n-1} t_j L_0^j\right) $, with $L(0)
= L_0$,
 generate the matrix entries of $L(\mathbf{t})$ in the form (\ref{L}), and
the solution of the Pfaff lattice hierarchy. For example, we have
\cite{adler:99, kodama:07} (see Section \ref{pfaff}),
\begin{equation}\label{ab}
a_k(\mathbf{t})=a_k(0)\frac{\sqrt{\tau_{2k+2}(\mathbf{t})
    \tau_{2k-2}(\mathbf{t}) }}{\tau_{2k}(\mathbf{t})},\quad\quad b_k(\mathbf{t})=\frac{\partial}{\partial t_1}\ln\tau_{2k}(\mathbf{t})\,.
\end{equation}
We also have the following Lemma:
\begin{Lemma} \label{sr}
Let $g = r s$ be a $2n\times 2n$ matrix where $r \in
{G}_\mathfrak{k}$ and $s \in Sp(n, \mathbb{R})$. Then
the $2\times 2$ block diagonal entries of $r$,
${\rm diag}_2(r)={\rm diag}_2(r_1I_2,\ldots,r_nI_2)$, are expressed by the $\tau$-functions,
\[ r_{k}
        = \sqrt{ \frac{\tau_{2k}}{\tau_{2k-2}} }
\,.\]
\end{Lemma}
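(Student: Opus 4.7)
The plan is to reduce $M = gJg^T$ to a Pfaffian involving only $r$ by exploiting the symplectic property of $s$, then to slice down to the upper-left $2k\times 2k$ block and use the determinant/Pfaffian identity.

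First I would observe that since $s \in Sp(n,\mathbb{R})$ satisfies $sJs^T = J$, we have
\[
M \;=\; gJg^T \;=\; r\,(sJs^T)\,r^T \;=\; r J r^T.
\]
Next, because $r \in G_{\mathfrak{k}}$ is lower $2\times 2$ block triangular with diagonal blocks $r_1 I_2,\ldots,r_n I_2$, and $J = I_n \otimes\left(\begin{smallmatrix} 0 & 1 \\ -1 & 0\end{smallmatrix}\right)$ is block diagonal, the upper-left $2k\times 2k$ submatrix of $rJr^T$ depends only on the upper-left $2k\times 2k$ submatrices of $r$ and $J$. Writing $r^{(k)}$ for the upper-left $2k\times 2k$ block of $r$ (still lower block triangular with diagonal blocks $r_1 I_2,\ldots,r_k I_2$) and $J^{(k)} = I_k \otimes \left(\begin{smallmatrix} 0 & 1 \\ -1 & 0\end{smallmatrix}\right)$, one gets
\[
M_{2k} \;=\; r^{(k)}\, J^{(k)}\, (r^{(k)})^T.
\]

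Now I would apply the standard identity for Pfaffians, $\mathrm{pf}(A N A^T) = \det(A)\,\mathrm{pf}(N)$, valid for any square $A$ and skew-symmetric $N$. Since $J^{(k)}$ is block diagonal with $k$ copies of the $2\times 2$ symplectic block, $\mathrm{pf}(J^{(k)}) = 1$, while $\det(r^{(k)}) = \prod_{j=1}^k \det(r_j I_2) = \prod_{j=1}^k r_j^2$. Therefore
\[
\tau_{2k} \;=\; \mathrm{pf}(M_{2k}) \;=\; \prod_{j=1}^k r_j^2.
\]
Taking the ratio of consecutive $\tau$-functions yields $\tau_{2k}/\tau_{2k-2} = r_k^2$, and fixing the sign via the convention that the diagonal entries of $r$ are positive (the connected component of $G_{\mathfrak{k}}$ containing the identity, as discussed just before the lemma) gives $r_k = \sqrt{\tau_{2k}/\tau_{2k-2}}$.

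There is no real obstacle: the only subtlety is the block-triangular slicing in step two, which has to be justified carefully so that the off-diagonal entries of $r$ outside of the $2k\times 2k$ corner genuinely contribute nothing to $M_{2k}$. This is immediate from the block lower triangular form of $r$ combined with the block diagonality of $J$, but it is worth spelling out explicitly to make clear that $M_{2k}$ really is $r^{(k)} J^{(k)} (r^{(k)})^T$ rather than a more complicated expression.
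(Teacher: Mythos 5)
Your proof is correct and follows the same route as the paper: use $sJs^T=J$ to reduce $M$ to $rJr^T$, then compute $\mathrm{pf}(M_{2k})=\prod_{j=1}^k r_j^2$. The paper states this in one line, while you supply the details (the block-triangular slicing and the identity $\mathrm{pf}(ANA^T)=\det(A)\,\mathrm{pf}(N)$) that the paper leaves implicit.
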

\begin{proof} Since $M=gJg^T=rsJs^Tr^T=rJr^T$, we have
$\tau_{2k}={\rm pf}(M_{2k})=\prod_{j=1}^kr_j^2$, which leads to the formula.
\end{proof}
A necessary condition for the existence of the factorization is that $\tau_{2k} = \mbox{pf}(M_{2k})$ must be positive (Theorem \ref{SRtheorem});
Lemma \ref{sr} and the convention that $\tau_0 = 1$
 explain this condition.
There are explicit algorithms for carrying out SR-factorization and they have many features in common with those developed for QR-factorization, we refer the reader to \cite{Fa00} and \cite{bunse:86}.

\section{Pfaff lattice hierarchy}\label{pfaff}

In \cite{adler:99}, Adler, Horozov, and van Moerbeke introduced the
Pfaff lattice hierarchy to describe the partition functions of the
Gaussian orthogonal and symplectic ensembles of random matrices (GOE
and GSE), and to describe the evolution of the recursion relations
of skew-orthogonal polynomials. The finite Pfaff lattice hierarchy is a set
of Hamiltonian flows on $\mathfrak{sl}(2n, \mathbb{R})$ with the
Lie-Poisson structure induced by the splitting $ \mathfrak{sl}(2n,
\mathbb{R}) = \mathfrak{k} \oplus \mathfrak{sp}(n)$, and it is
defined as follows (see also \cite{kodama:07}):
 For any $X, Y\in\mathfrak{sl}(2n)$ we first
define the $\mathcal R$-bracket
\[ [X, Y]_{\mathcal{R}} = [ \mathcal{R}X, Y] + [X, \mathcal{R}Y]\,, \]
where $\mathcal{R}$ is the $R$-matrix given by $\mathcal{R} = \frac{1}{2} (\pi_{\mathfrak{k}} -
\pi_{\mathfrak{sp}}) $.
Then the Lie-Poisson bracket for any functions $F$ and $H$ on ${\mathfrak{sl}}^*(2n)\cong {\mathfrak{sl}}(2n)$ is defined by
\[ \left\{ F, H\right\}_\mathcal{R}(L) = \langle L, \left[ \nabla F, \nabla H
  \right]_\mathcal{R} \rangle\,, \]
  where $\langle A,B\rangle={\rm tr}(AB)$ and $\nabla F$ is defined by
  $\langle X,\nabla F\rangle=\frac{d}{d\epsilon}F(L+\epsilon X)|_{\epsilon=0}$.
The Pfaff lattice with respect to a Hamiltonian function $H(L)$ is defined by
\[
 \frac{dL}{dt} =\left\{H(L),L\right\}_\mathcal{R}(L)\,.
 \]
In particular, if the Hamiltonian is $Sp$-invariant, one can write
\[
\left\{H(L),L\right\}_\mathcal{R} (L)=  \left[ \pi_\mathfrak{sp}(\nabla H(L) ), L
  \right]\,.
\]
The traditional Hamiltonians giving the Pfaff lattice are $\frac{1}{k+1} \mbox{tr}(L^{k+1})$ \cite{kodama:07}.
The Pfaff lattice hierarchy is then defined by
\begin{equation}\label{pfaffhierarchy}
\frac{\partial L}{\partial t_k}=[\pi_{\mathfrak{sp}}(L^k), L]\qquad  k=1,\ldots,2n-1.
\end{equation}
Thus the $t_k$-flow of the Pfaff lattice hierarchy is associated to the Hamiltonian $H_k=\frac{1}{k+1}{\rm tr}(L^{k+1})$.
Each flow is solved by the following factorization procedure:
Factor
\begin{equation}\label{RSfactor}
\exp\left( t \nabla H(L(0)) \right) = R(t) S(t)
\end{equation}
with the initial matrix $L(0)$ using SR-factorization with $R$ in the connected component of
$G_{\mathfrak{k}}$ containing the identity
and $S\in Sp(n)$, then the solution is given by
\begin{equation}\label{LRS}
L(t) = R(t)^{-1} L(0) R(t)  = S(t) L(0) S(t)^{-1}\,.
\end{equation}
With this in mind, one sees that as in the case of the QR-algorithm  with the Toda lattice, the
SR-algorithm is given by integer evaluations of the Pfaff flow with
Hamiltonian $H(L) = {\rm{tr}}( L \ln(L) - L ) $ a result we will show in detail in the Appendix.

In \cite{kodama:07}, we showed that the Pfaff lattice hierarchy (\ref{pfaffhierarchy})
 is an integrable
system in the Arnold-Liouville sense.
Normalizing the matrix $L$  of the form (\ref{L}) by
\[ \hat{L} = P L P^{-1} \]
with the diagonal matrix $P$ in the $2\times 2$ block form,
\[ P = \mbox{diag}_2\left(I_2,\,  a_1I_2, \, (a_1 a_2)I_2,\, \dots \,,\,
\left(\prod_{j=1}^{n-1} a_j\right)I_2 \right)\,,
\]
$\hat{L}$ is  a lower Hessenberg matrix with $1$'s on the super diagonal.
Then we showed that if $L$ was not symplectic and all the eigenvalues are
 {\it real} and {\it distinct}, then
 as $t_j\to \infty$, $\hat{L}(t_j)$ converges to a $2\times 2$
block upper triangular matrix such that the diagonal blocks are sorted by the
size of $ z^j$.

Solutions of the Pfaff lattice hierarchy are generated by the
$\tau$-functions (introduced as obstructions to the SR-factorization in
the previous section).  They are found by the following procedure:
We first consider the factorization of $g(\mathbf{t}) := \exp\left( \sum_{j=1}^{2n-1} t_j L^j_0 \right)$
with $L_0=L(0)$ (see (\ref{RSfactor})),
\[
g(\mathbf{t})=R(\mathbf{t})S(\mathbf{t})\qquad{\rm with}\quad R(\mathbf{t})\in G_{\mathfrak k},~~S(\mathbf{t})\in Sp(n)\,.
\]
Then the skew-symmetric matrix $M(\mathbf{t})=g(\mathbf{t})Jg^T(\mathbf{t})$ becomes
\[
M(t)=R(\mathbf{t})JR^T(\mathbf{t})\,.
\]
Since $R\in G_{\mathfrak k}$, we have
\[
{\rm diag}_2(R):={\rm diag}\left(r_1I_2, r_2I_2,\ldots, r_nI_2\right)\,.
\]
Then the $\tau$-functions $\tau_{2k}$ defined in (\ref{tau}) can be written by
\[
\tau_{2k}(\mathbf{t})={\rm pf}(M(\mathbf{t})_{2k})=\prod_{j=1}^k r_j(\mathbf{t})^2\,.
\]
Then from (\ref{LRS}), i.e. $R(\mathbf{t})L(\mathbf{t})=L(0)R(\mathbf{t})$, we have
\[
a_k(\mathbf{t})=a_k(0)\frac{r_{k+1}(\mathbf{t})}{r_k(\mathbf{t})}=a_k(0)\frac{\sqrt{\tau_{2k+2}(\mathbf{t})\tau_{2k-2}(\mathbf{t})}}{\tau_{2k}(\mathbf{t})}\,,
\]
which gives the $a_k$'s in (\ref{ab}) (see also \cite{adler:99, kodama:07}).

\subsection{Odd Pfaff flows of symplectic matrices}\label{Oddflow}
Here we show that  if
$L(0)$ is a symplectic matrix then it is a fixed point of the
odd members of the Pfaff lattice hierarchy.
To see this, one notes:
\begin{Lemma}\label{lemma-3.1}
For $L\in\mathfrak{sp}(n)$, the odd power $L^{2j-1}$ is
also symplectic.
\end{Lemma}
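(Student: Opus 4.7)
The plan is to verify the defining identity $JM + M^T J = 0$ for $M = L^{2j-1}$ directly from the corresponding identity for $L$. The key observation I would exploit is that the symplectic algebra condition $JL + L^T J = 0$ can be rewritten as $L^T = -J L J^{-1}$ (recall $J^{-1} = -J$), which expresses the map $L \mapsto L^T$ on $\mathfrak{sp}(n)$ as conjugation by $J$ composed with an overall sign. Because this map is essentially conjugation, it behaves well under taking powers — which is what produces the parity distinction between odd and even powers.

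Concretely, I would first rewrite the membership condition for $\mathfrak{sp}(n)$ as $L^T = -J L J^{-1}$. Taking the $k$-th power of both sides and using $(L^T)^k = (L^k)^T$, the inner $J^{-1}J$ factors telescope to give the clean identity
\[
(L^k)^T \;=\; (-1)^k\, J\, L^k\, J^{-1}, \qquad k \ge 1.
\]
Specializing to $k = 2j-1$, the sign becomes $-1$, so $(L^{2j-1})^T = -J L^{2j-1} J^{-1}$. Multiplying on the right by $J$ yields $(L^{2j-1})^T J = -J L^{2j-1}$, i.e.\ $J L^{2j-1} + (L^{2j-1})^T J = 0$, which is exactly the defining relation for $\mathfrak{sp}(n)$. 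The tracelessness required to sit inside $\mathfrak{sl}(2n)$ is automatic, since the eigenvalues of $L$ come in $\pm$ pairs and hence so do those of any odd power.

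There is essentially no obstacle here: the whole argument rests on the single observation that $L \mapsto -JLJ^{-1}$ is a (signed) conjugation and therefore multiplicative up to the sign $(-1)^k$. The parity of $k$ is the entire content of the lemma — even powers would give $(L^k)^T = +JL^kJ^{-1}$, landing in the $+1$-eigenspace of the involution rather than in $\mathfrak{sp}(n)$, which is precisely why only the odd flows of the hierarchy will turn out to be trivial in the subsequent subsection.
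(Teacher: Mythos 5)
Your proposal is correct and rests on exactly the same idea as the paper's proof: the symplectic condition is the signed conjugation identity $L^T=JLJ$ (your $L^T=-JLJ^{-1}$), which is multiplicative up to the sign $(-1)^k$, so odd powers stay in $\mathfrak{sp}(n)$. The paper phrases this as an induction on odd powers by inserting pairs of $J$'s, while you write the closed-form power identity directly; the content is identical.
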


\begin{proof}
Being symplectic is equivalent to $ J L J = L^T $.
Suppose that $J L^{2j-3} J = \left( L^T \right)^{2j-3}$.
 Then we have
\[ J L^{2j-1} J = J L^{2j-3} J J L JJ L J = \left( L^T \right)^{2j-3}
L^T L^T = \left( L^T \right)^{2j-1}
\]
so that the lemma is true by induction.
\end{proof}
Therefore, $\pi_{\mathfrak{sp}}(L^{2j-1}) = L^{2j-1}$ for $L\in\mathfrak{sp}(n)$,
hence the odd members of the Pfaff hierarchy become trivial, i.e.
\[ \frac{\partial L}{\partial t_{2j-1}} = [ L^{2j-1}, L ] = 0 \,.\]
Note in particular that all $b_k$ in (\ref{ab}) vanish, which is consistent with
the form $L$ in Theorem \ref{BG.hessenbergL0}, that is, the diagonal elements are all zero.

We note that this also happens for the Toda lattice hierarchy:
Recall that the Toda lattice equation for a symmetric matrix is based on the Lie algebra splitting
$\mathfrak{sl}(n) = \mathfrak{b} \oplus \mathfrak{so}(n)$,
where $\mathfrak{b}$ is the set of upper triangular matrices.
With the pairing $\langle A, B\rangle:={\rm tr}(AB)$ for $A,B\in\mathfrak{sl}(n)$,
we have $\mathfrak{sl}(n)\cong\mathfrak{sl}(n)^*=\mathfrak{b}^*\oplus\mathfrak{so}(n)^*$
where $\mathfrak{b}^*\cong\mathfrak{so}(n)^{\perp}={\rm Sym}(n)$ and
$\mathfrak{so}(n)^*\cong\mathfrak{b}^{\perp}=\mathfrak{n}$. Here
${\rm Sym}(n)$ is the set of symmetric matrices, and $\mathfrak{n}$ is
the set of strictly upper triangular matrices.
Then the Lie-Poisson bracket for the functions $F,G$ on ${\rm Sym}(n)$ is defined by
\[ \left\{ F, H \right\}(L) = \langle L, \left[ \nabla F, \nabla
  H\right] \rangle\quad{\rm for}\quad L\in{\rm Sym}(n)\,.\]
 The Toda lattice hierarchy for $L\in {\rm Sym}(n)$ is then defined by
\begin{equation}\label{symmetricToda}
\frac{\partial L}{\partial t_j}=\{H_j,L\}(L)=[\pi_{\mathfrak{so}}(\nabla H_j),L]\,.
\end{equation}
With the Hamiltonian functions $H_j(L)=\frac{1}{j+1}{\rm tr}(L^{j+1})$
for $j=1,\ldots,n-1$,  the differential equation for $L$ becomes
\begin{equation}
\frac{\partial L}{\partial t_j} = [ \pi_{\mathfrak{so}}(L^j), L],
\end{equation}
with $\pi_{\mathfrak{so}}(X)$
meaning the projection of the matrix $X$ on the $\mathfrak{so}(n)$ component.

Now if we extend the Toda lattice equation for the general $L\in
\mathfrak{sl}(n,\mathbb{R})$, an analogue of Lemma \ref{lemma-3.1}
is true with $\mathfrak{so}(n)$ instead of $\mathfrak{sp}(n)$. Then,
for $L\in\mathfrak{so}(n)$ (i.e. skew-symmetric), the odd members of
the hierarchy become trivial. However the even powers of $L$ are
symmetric, and the even members of the generalized Toda hierarchy
for $L$ are equivalent to symmetric tridiagonal Toda lattices.
 This case includes
the Kac-van Moerbeke system \cite{KM:75}: Consider for example the following $2n\times 2n$ skew-symmetric tridiagonal matrix,
\[
L=\left( \begin{array}{ccccc}
0  &  \alpha_1  &  0  &   \cdots  &  0   \\
-\alpha_1 &  0  &  \alpha_2 & \cdots  &  0  \\
\vdots  &   \ddots  &  \ddots  &  \ddots  &  \vdots  \\
0         &    \cdots   &   -\alpha_{2n-2} &   0  &  \alpha_{2n-1}  \\
0      &   \cdots  &  0   &   -\alpha_{2n-1}  &  0
\end{array}\right)\,.
\]
Then the Kac-van Moerbeke hierarchy may be expressed by $\displaystyle{
\frac{\partial L}{\partial t_{2j}}= [\pi_{\mathfrak{so}}(L^{2j}),\,L]}$, where the first member gives
\[
\frac{\partial \alpha_k}{\partial t_2}=\alpha_k(\alpha_{k-1}^2-\alpha_{k+1}^2),\qquad k=1,\ldots,2n-1\,,
\]
with $\alpha_0=\alpha_{2n}=0$.
We then note that the square $L^2$ is a symmetric matrix,
\begin{equation} \label{3A}
L^2=T^{(1)}\otimes \left(\begin{array}{cc} 1 & 0\\0&0\end{array}\right) \,+\, T^{(2)}\otimes \left( \begin{array}{cc} 0&0\\0&1\end{array}\right)\,,
\end{equation}
where $T^{(i)}$, for $i=1,2$, are $n\times n$ symmetric tridiagonal matrices given by
\[
T^{(i)}= \left(\begin{array}{ccccc}
b^{(i)}_1    & a^{(i)}_1 & 0 & \cdots & 0  \\
a^{(i)}_1 &  b_2^{(i)} & a_2^{(i)} & \cdots  & 0\\
\vdots    &   \vdots   &  \ddots   &  \ddots   &   \vdots  \\
0     &           0        &             \cdots  &     b_{n-1}^{(i)}    &   a_{n-1}^{(i)}\\
0     &           0       &          \cdots  &     a_{n-1}^{(i)} &b_{n}^{(i)}
\end{array}\right)\,,
\]
with $a^{(1)}_k = \alpha_{2k-1}
\alpha_{2k} $, $b^{(1)}_k = - \alpha_{2k-2}^2 - \alpha_{2k-1}^2 $,
$a^{(2)}_k=\alpha_{2k}\alpha_{2k+1}$, and
$b^{(2)}_k=-\alpha_{2k-1}^2-\alpha_{2k}^2$.  Then
one can show that each $T^{(i)}$ gives the symmetric Toda lattice, that is,
the equation $\displaystyle{\frac{\partial L}{\partial t_{2j}}=[\pi_{\mathfrak{so}}(L^{2j}),L]}$
splits into two Toda lattices,
\[
\frac{\partial T^{(i)}}{\partial t_{2j}} = [\pi_{\mathfrak{so}}(T^{(i)})^{j},
  T^{(i)}]\,\quad{\rm for}\quad  i=1,2\,.
\]
The equations for $T^{(i)}$ are connected by the Miura-type transformation, with the functions
 $(a^{(i)}_k,b^{(i)}_k)$, through the Kac-van Moerbeke variables $\alpha_k$ (see \cite{GHSZ:93}).

We now show that the even Pfaff flows on a $H^S$-tridaigonal matrix have
a similar structure.

\subsection{Even Pfaff flows of symplectic matrices}
We first show that the $H^S$-tridiagonal form (\ref{BG.L0form}) is invariant
under the even members of the Pfaff hierarchy. Then we show
 that the Pfaff flow for the matrix $L$ in the form (\ref{BG.L0form})
is related to the indefinite Toda lattice defined in \cite{KoYe96}.
The indefinite Toda lattice is a continuous version of the HR
algorithm \cite{Fa93}, and is defined as follows (see also
Section \ref{pfaffToda} for some details): Let $C$ be the
$n\times n$ diagonal matrix given by,
\[
C={\rm diag}(c_1,c_2,\ldots,c_n)\,,
\]
where $c_k$'s are the elements appearing in the diagonal blocks of (\ref{BG.L0form}),
and they are invariant under the Pfaff flow (see \cite{kodama:07}).
We also define $\tilde T$ by $\tilde T:=CT$ with the triadiagonal matrix (\ref{Tsymm}), i.e.
\begin{equation}\label{T}
\tilde{T} = \left(\begin{array}{ccccc}
c_1 d_1 & c_1 a_1 & \cdots  &\cdots & 0 \\
c_2 a_1 & c_2 d_2 & \cdots & \cdots & 0 \\
  \vdots  &  \vdots   & \ddots & \ddots & \vdots \\
  0  &  0 & \cdots& c_{n-1}d_{n-1} & c_{n-1} a_{n-1} \\
   0 & 0 & \cdots & c_n a_{n-1}& c_n d_n \end{array}\right)\,.
\end{equation}
Then the indefinite Toda lattice hierarchy is defined by
\begin{equation}\label{indefinite}
\frac{\partial\tilde{T}}{\partial t_{2j}}=[B_j,\tilde{T}]\,,
\end{equation}
where $B_j:=[\tilde{T}^j]_+-[\tilde{T}^j]_-$. Here the $[\tilde{T}^j]_{\pm}$
represent the projections on the upper $(+)$ and lower $(-)$ triangular parts
of the matrix $\tilde{T}^j$ (sometimes we write $\tilde{T}^j_{\pm}=[\tilde{T}^j]_{\pm}$).

 With $L$ in the $H^S$-tridiagonal form (\ref{BG.L0form}), we first note
\begin{equation}\label{L2Tij}
L^2=\tilde{T}\otimes\left(\begin{array}{cc} 1 & 0\\0&0\end{array}\right) \, +\,
\tilde{T}^T\otimes\left(\begin{array}{cc} 0&0\\0&1\end{array}\right)\,,
\end{equation}
  from which we have, for any even power,
\[  L^{2j} = \tilde{T}^j\otimes \left(\begin{array}{cc} 1&0\\0&0\end{array}\right)\,+\,(\tilde{T}^{j})^T\otimes\left(\begin{array}{cc}
0&0\\0&1
\end{array} \right)\,.\]
\begin{Remark}
We remark here that equation (\ref{L2Tij}) is similar to (\ref{3A}),
this structure is fundamental and implies that our problem will have
a similar structure to that of the Kac-van Moerbeke lattice.
\end{Remark}
Then using the projection (\ref{projection2}), we obtain
 \begin{equation}\label{2mgenerator}
  \pi_{\mathfrak{sp}}(L^{2j}) = B_j\otimes\left(\begin{array}{cc}1&0\\0&0\end{array}\right)\,
  - \, B_j^T\otimes\left(\begin{array}{cc}0&0\\0&1\end{array}\right)\,,
 \end{equation}
 with $B_j=\tilde{T}^j_+-\tilde{T}^j_-$.
   With (\ref{2mgenerator}), one can easily show that the $H^S$-tridiagonal form
   is invariant under the even members of the Pfaff lattice hierarchy.
 It is also immediate to see from (\ref{2mgenerator}) that the $2m$-th member of the Pfaff
lattice hierarchy (\ref{Pfaff}) in terms of $L^2$ is equivalent to the indefinite Toda lattice (\ref{indefinite}).
Thus we have the following theorem:
\begin{Theorem}\label{TP}
The even Pfaff flow for $L\in\mathfrak{sp}(n,\mathbb{R})$ in the $H^S$-tridiagonal
matrix form (\ref{BG.L0form}),
\[
\frac{\partial L}{\partial t_{2m}}=\left[\pi_{\mathfrak{sp}}\left(L^{2m}\right),L\right]\,,
\]
is equivalent to the indefinite Toda flow for $\tilde{T} = CT$ with the symmetric tridiagonal matrix $T$
of (\ref{Tsymm}) and $C:={\rm diag}(c_1,c_2,\ldots,c_n)$,
\begin{equation}\label{Ttoda}
 \frac{\partial \tilde{T}}{\partial t_{2m}} = [ B_m,\tilde{T}]\,,
\end{equation}
where $B_m=\tilde{T}_+^m-\tilde{T}_-^m$.
\end{Theorem}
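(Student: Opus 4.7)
The strategy is to exploit the tensor decomposition (\ref{L2Tij}) of $L^2$ on $H^S$-tridiagonal matrices and push it through the $t_{2m}$-flow. Throughout let me denote the diagonal idempotents appearing in (\ref{L2Tij}) by $\sigma_1$ and $\sigma_2$, so that $\sigma_1^2=\sigma_1$, $\sigma_2^2=\sigma_2$, $\sigma_1\sigma_2=\sigma_2\sigma_1=0$, and $\sigma_1+\sigma_2=I_2$.

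First I would verify (\ref{L2Tij}) by a direct block-by-block computation on (\ref{BG.L0form}). The two block-types appearing in $L$ multiply pairwise into scalar multiples of $\sigma_1$, $\sigma_2$, or $I_2$, never mixing the two idempotents, and the scalar coefficients are exactly the entries of $\tilde T$ given in (\ref{T}). Because $\sigma_1$ and $\sigma_2$ are orthogonal idempotents, an immediate induction on $j$ then yields
\[ L^{2j} = \tilde T^j\otimes\sigma_1 + (\tilde T^j)^T\otimes\sigma_2. \]

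Next I would apply the symplectic projection (\ref{projection2}) to this expression. Splitting $\tilde T^j = \tilde T^j_+ + (\tilde T^j)_0 + \tilde T^j_-$ into strict-upper, diagonal, and strict-lower parts, the $2\times 2$ block upper-triangular part of $L^{2j}$ is $\tilde T^j_+\otimes\sigma_1 + (\tilde T^j_-)^T\otimes\sigma_2$, while the block-diagonal part is $(\tilde T^j)_0\otimes I_2$. Writing $J = I_n\otimes\sigma$ with $\sigma = \left(\begin{array}{cc}0&1\\-1&0\end{array}\right)$, one checks from $\sigma^2=-I_2$ that $\sigma\sigma_1\sigma = -\sigma_2$, $\sigma\sigma_2\sigma = -\sigma_1$, and $\sigma I_2\sigma = -I_2$. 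Substituting into (\ref{projection2}), the diagonal contribution $\frac{1}{2}(X_0+JX_0^TJ)$ cancels identically, while the upper-triangular contribution collapses to (\ref{2mgenerator}) with $B_j = \tilde T^j_+ - \tilde T^j_-$.

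Finally I would compute the Lax commutator $\partial L^2/\partial t_{2m} = [\pi_{\mathfrak{sp}}(L^{2m}), L^2]$ by expanding both factors in tensor form. The cross terms proportional to $\sigma_1\sigma_2$ vanish, leaving
\[ \frac{\partial L^2}{\partial t_{2m}} = [B_m, \tilde T]\otimes\sigma_1 \,-\, [B_m^T, \tilde T^T]\otimes\sigma_2. \]
Matching this with the time derivative of (\ref{L2Tij}) yields $\partial\tilde T/\partial t_{2m} = [B_m, \tilde T]$, which is precisely the indefinite Toda hierarchy (\ref{indefinite}); the $\sigma_2$-component is automatically its transpose. Since $[B_m, \tilde T]$ remains tridiagonal, the $H^S$-tridiagonal shape is preserved along the flow, and because the $c_k$ are Casimirs one recovers the equations for $d_k$ and $a_k$ from those for the entries $c_kd_k$ and $c_ka_k$ of $\tilde T$. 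The main bookkeeping obstacle is the careful splitting of $L^{2j}$ into strict and non-strict triangular parts during the projection step, together with tracking the sign changes coming from the identity $\sigma\sigma_i\sigma = -\sigma_{3-i}$; this identity is precisely what forces the $\sigma_1$- and $\sigma_2$-components of $\pi_{\mathfrak{sp}}(L^{2j})$ to appear with opposite signs, which in turn makes the cross-commutators vanish and delivers a clean indefinite Toda equation on $\tilde T$.
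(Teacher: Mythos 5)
Your proof is correct and follows essentially the same route as the paper: establish the tensor decomposition $L^{2j}=\tilde T^{j}\otimes\sigma_1+(\tilde T^{j})^T\otimes\sigma_2$, apply the projection $\pi_{\mathfrak{sp}}$ to obtain (\ref{2mgenerator}) with $B_j=\tilde T^j_+-\tilde T^j_-$, and read off the indefinite Toda equation for $\tilde T$ from the commutator. You in fact supply more detail than the paper, which simply asserts that the projection formula and the resulting equivalence are immediate.
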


If the $c_k = 1$ for all $k$, then  $\tilde{T} = T$ is symmetric and
$B_m=\pi_{\mathfrak{so}}(T^m)$. Theorem \ref{TP} then says that the
$2m$-th Pfaff flow is just the traditional symmetric tridiagonal
$m$-th flow of the Toda lattice hierarchy.  In terms of the
SR-algorithm in Proposition \ref{pfaffalgorithm}, we obtain as a
Corollary of Theorem \ref{TP} that the even-integer iterates,
$L_{2k}$ of the SR-algorithm are equivalent to the $\tilde{T}_k$
iterates of the HR-algorithm. This result is shown directly in
\cite{benner:97}, the key is the structure of equation
(\ref{L2Tij}).  With the permuted $J$ matrix used in
\cite{benner:97}, i.e. $J=\left(\begin{array}{cc}0&1\\-1&0\end{array}\right)\otimes I_n$,  the right hand side of relation (\ref{L2Tij}) takes
the form
\[ \left(\begin{array}{cc} \tilde{T} & 0_n \\ 0_n & \tilde{T}^T
\end{array}\right)=\left(\begin{array}{cc}1&0\\0&0\end{array}\right)\otimes \tilde{T}+\left(\begin{array}{cc}0&0\\0&1\end{array}\right)\otimes \tilde{T}^T\,.
\]

\section{Pfaff lattice vs indefinite Toda lattice} \label{pfaffToda}
Here we give a further discussion on the equivalence between the Pfaff lattice hierarchy on
$H^S$-tridiagonal matrix and the indefinite Toda lattice hierarchy introduced in \cite{KoYe96}.
\subsection{The indefinite Toda lattice}
Let us begin with a brief description of the indefinite Toda lattice defined by (\ref{indefinite})
for a matrix $\tilde{T}=CT$ with a symmetric matrix $T$ of the form (\ref{Tsymm}), i.e.
\[
\frac{\partial\tilde{T}}{\partial t}=[B,\tilde{T}]\qquad{\rm with}\qquad B=\tilde{T}_+-\tilde{T}_-\,.
\]
The solution $\tilde{T}(t)$ with the initial matrix $\tilde{T}_0$ can be solved by
the HR-factorization,
\[
\tilde{g}=\exp(t\tilde{T}_0)=r(t)\,h(t)\,,
\]
where $r$ is a lower triangular matrix and $h$ satisfies $hCh^T=C$ (recall that if $C=I_n$,
$h\in SO(n)$). It is then easy to show that the solution $\tilde{T}(t)$ is obtained by
\begin{equation}\label{Tequation}
\tilde{T}(t)=r(t)^{-1}\,\tilde{T}_0\,r(t)=h(t)\,\tilde{T}_0\,h(t)^{-1}\,.
\end{equation}
We note that the entries $a_k$ in $\tilde{T}$ are expressed in terms of the diagonal
elements of $r$, say ${\rm diag}(r)={\rm diag}(r_1,\ldots,r_n)$,
\[
a_k(t)=a_k(0)\,\frac{r_{k+1}(t)}{r_k(t)}\,.
\]
To find those diagonal entries from $\tilde{g}$, one considers the following matrix called
the moment matrix,
\begin{equation}\label{Tmoment}
M^{\rm{Toda}}:=\tilde{g}\,C\,\tilde{g}^T=r\,C\,r^T\,.
\end{equation}
Then the matrix $r$ can be found by the Cholesky factorization method.
Now the $\tau$-functions are defined by
\[
\tau_k^{\rm{Toda}}={\rm det}\,(M^{\rm{Toda}}_k)\,,
\]
where $M^{\rm{Toda}}_k$ is the $k\times k$ upper left submatrix of $M^{\rm{Toda}}$.
With (\ref{Tmoment}), we have
$
\tau_k^{\rm{Toda}}=\prod_{j=1}^kc_j\,r_j^2\,,
$
which gives $c_kr_k^2=\tau_k/\tau_{k-1}$. Then the entries $a_k$ of $\tilde{T}$ can be
expressed in terms of the $\tau$-functions,
\begin{equation}\label{Todatau}
a_k(t)=a_k(0)\,\sqrt{\frac{c_k}{c_{k+1}}}\,\frac{\left[{\tau^{\rm{Toda}}_{k+1}(t)\,\tau^{\rm{Toda}}_{k-1}(t)}\right]^{1/2}}{\tau^{\rm{Toda}}_k(t)}\,.
\end{equation}
As will be shown in the next section (see Theorem \ref{PTtau} below),
those are the same as the formulae for the $a_k$'s in (\ref{ab})
given in terms of the $\tau$-functions of the Pfaff lattice.

Let us also discuss the orthogonal functions appearing in the indefinite Toda lattice.
First we note that the Lax form (\ref{indefinite}) is given by the compatibility of the equations,
\[
\tilde{T}\Phi=\Phi D \qquad{\rm and}\qquad \frac{\partial\Phi}{\partial t_{2j}}=B_j\Phi\,,
\]
where $D={\rm diag}(\lambda_1,\ldots,\lambda_n)$ with the
eigenvalues $\lambda_k$, and $\Phi=(\phi_i(\lambda_j))_{1\le i,j\le
n}$ the eigenmatrix. As the orthogonality condition of the
eigenvectors with a normalization, we have
\begin{Lemma}\label{Torthogonal}
\[
\Phi C\Phi^T=C\,.
\]
\end{Lemma}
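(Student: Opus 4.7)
The plan rests on the observation that $\tilde T = CT$ is self-adjoint with respect to the indefinite bilinear form $\langle u,v\rangle_C := u^T C v$. Since $T = T^T$, $C = C^T$, and $C^2 = I_n$ (recall $c_k = \pm 1$ from (\ref{BG.L0form})), one immediately computes
\[
\tilde T^T C \;=\; T^T C^T C \;=\; T \;=\; C^2 T \;=\; C \tilde T.
\]
Applying this together with $\tilde T\Phi = \Phi D$, I would get
\[
(\Phi^T C \Phi)\,D \;=\; \Phi^T C\,\tilde T\,\Phi \;=\; \Phi^T \tilde T^T C\,\Phi \;=\; D\,(\Phi^T C \Phi),
\]
so that $\Phi^T C\Phi$ commutes with $D$. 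Assuming the simple-spectrum condition under which the eigenmatrix $\Phi$ is introduced, this forces $\Phi^T C\Phi$ to be diagonal; equivalently, eigenvectors for distinct eigenvalues are $C$-orthogonal.

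With the diagonal structure established, I would then invoke the normalization implicit in the statement: rescale the columns of $\Phi$ and choose the ordering of the eigenvalues so that the $j$-th diagonal entry of $\Phi^T C\Phi$ equals $c_j$. Each $\phi_j^T C\phi_j$ is, generically, a nonzero real scalar, so one can rescale it to $\pm 1$; the sign/ordering match is then possible because the $C$-signature of the eigenbasis must equal that of $C$ itself (a Sylvester-type count, since $\Phi$ is a change-of-basis from the standard frame to the eigenframe). This delivers $\Phi^T C\Phi = C$.

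Finally, the identity $\Phi C\Phi^T = C$ as stated in the lemma follows from $\Phi^T C\Phi = C$ by a purely algebraic step, exploiting $C^2 = I_n$: the latter rearranges to $\Phi^T = C\Phi^{-1}C^{-1}$, whence
\[
\Phi C\Phi^T \;=\; \Phi C\,(C\Phi^{-1}C^{-1}) \;=\; \Phi(C^2)\Phi^{-1}C^{-1} \;=\; C^{-1} \;=\; C.
\]
The main obstacle I anticipate is the normalization step when $\tilde T$ carries complex-conjugate pairs of eigenvalues, which is the generic indefinite Toda scenario: on such a two-dimensional real invariant subspace the quadratic form $\phi^T C\phi$ can vanish identically, and one must instead normalize a real-Jordan pair in block form before matching signatures. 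This is a finite collection of pair-cases that can be handled uniformly, after which the algebraic identity of the last display carries the argument through unchanged.
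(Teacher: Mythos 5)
Your proof is correct and follows essentially the same route as the paper's: both exploit the self-adjointness of $\tilde T$ with respect to the $C$-bilinear form (the paper phrases it via $\Phi^T C^{-1}\Phi=C^{-1}$, which is the same thing since $C^2=I_n$) to conclude that $\Phi^T C\Phi$ is diagonal for distinct eigenvalues, and then normalize. The only ingredient the paper includes that you omit is the observation that $\Phi^T C^{-1}\Phi$ is invariant under the flow, so the normalization imposed at $t=0$ persists for all $\mathbf{t}$; your extra care with the sign/ordering and with converting $\Phi^T C\Phi=C$ into $\Phi C\Phi^T=C$ fills in steps the paper leaves implicit.
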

\begin{Proof}
We consider the following relation which is equivalent to the orthogonality relation,
\[
\Phi^T C^{-1} \Phi=C^{-1}\,.
\](This is sometimes called the second orthogonality relation.)
To show this, we note
\[\begin{array}{llllll}
 \lambda_j\phi(\lambda_i)^TC^{-1}\phi(\lambda_j)&=&\phi(\lambda_i)^TC^{-1} \tilde{T}\phi(\lambda_j)
 =\phi(\lambda_i)^TT\phi(\lambda_j)\\
  &=& (\tilde{T}\phi(\lambda_i))^T C^{-1}\phi(\lambda_j)=\lambda_i\phi(\lambda_i)^T C^{-1}\phi(\lambda_j)\,.
  \end{array}
  \]
Since we assume $\lambda_i\ne\lambda_j$, the matrix $\Phi^TC^{-1}\Phi$ is diagonal.
We also note that this matrix is invariant under the flow, i.e. $\partial (\Phi^TC^{-1}\Phi)/\partial t=0$.
Normalizing the diagonal matrix gives the result.
\end{Proof}
We then define the inner product for functions $f(\lambda)$ and
$g(\lambda)$,
\begin{equation}\label{Tproduct}
\langle fg\rangle_{\rm Toda}:=\sum_{k=1}^n
f(\lambda_k)g(\lambda_k)c_k\,,
\end{equation}
which defines a discrete measure $d\mu(\lambda)=\sum_{k=1}^n
c_k\delta(\lambda-\lambda_k)\,dz$. Lemma \ref{Torthogonal} implies
\[
\langle \phi_i\phi_j\rangle_{\rm{Toda}}=c_i\delta_{ij}\,.
\]
With the orthogonality relation, the moment matrix $M^{\rm Toda}$ can be expressed by
\begin{equation}\label{Mtoda}
M^{\rm Toda}=\Phi_0e^{2tD}C\Phi^T_0=\left(\langle \phi^0_i\phi_j^0e^{2tz}\rangle_{\rm Toda}\right)_{1\le i,j \le n}\,,
\end{equation}
where $\Phi_0$ is the initial eigenmatrix with the eigenvector
$\phi_i^0=\phi_i^0(\lambda)=\phi_i(\lambda,0)$.

In terms of the initial eigenmatrix $\Phi_0$, we can find an
explicit form of $\Phi(\mathbf{t})$ for the indefinite Toda lattice
hierarchy with $\mathbf{t}=(t_2,t_4,\ldots,t_{2n})$ (see Theorem 1
in \cite{KoYe96} and Theorem 2 in \cite{KM:96}):
\begin{Lemma}\label{phi}
The eigenvector $\phi(\lambda, \mathbf{t})=(\phi_1(\lambda,\mathbf{t}),\ldots,\phi_n(\lambda,\mathbf{t}))^T$ can be expressed as
\begin{equation} \label{tau-det-form}
\phi_k(\lambda,\mathbf{t})=\frac{c_{k}^{1/2}e^{\xi(\lambda,\mathbf{t})}}{\left[\tau^{\rm
Toda}_{k}(\mathbf{t})\,\tau^{\rm
Toda}_{k-1}(\mathbf{t})\right]^{1/2}} \left|\begin{array}{cccc}
\,m_{1,1}(\mathbf{t})  & \cdots & m_{1,k-1}(\mathbf{t})& \phi_1^0(\lambda) \\
\,m_{2,1}(\mathbf{t}) & \cdots & m_{2,k-1}(\mathbf{t}) & \phi_2^0(\lambda)\\
\vdots  &  \ddots  &   \vdots & \vdots  \\
\,m_{k,1}(\mathbf{t}) &\cdots & m_{k,k-1}(\mathbf{t}) & \phi_{k-1}^0(\lambda)
\end{array}\right|\,,
\end{equation}
where $m_{i,j}(\mathbf{t}):=\langle\phi_i^0\phi_j^0e^{2\xi(\lambda,\mathbf{t})}\rangle_{\rm Toda}$
with $\xi(\lambda,\mathbf{t})=\sum_{k=1}^n\lambda^kt_{2k}$, and
the $\tau$-functions are given by
\begin{equation}\label{indefinite-tau}
\tau^{\rm Toda}_{k} ={\rm det}\left( M^{\rm Toda}_k \right) =
\left|(m_{i,j})_{1\le i,j \le k}\right|\,.
\end{equation}
\end{Lemma}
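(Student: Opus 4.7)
The plan is to recognize the right-hand side of (\ref{tau-det-form}) as the classical Heine--Szeg\H{o} determinantal formula for the Gram--Schmidt orthogonalization of the sequence $\phi_1^0(\lambda), \phi_2^0(\lambda), \dots$ with respect to a time-deformed indefinite bilinear form, and to show that the resulting family, multiplied by the prefactor $e^{\xi(\lambda,\mathbf t)}$, reproduces the eigenvector $\phi_k(\lambda, \mathbf t)$ of $\tilde T(\mathbf t)$. The exponential prefactor is the natural phase produced by the hierarchy evolution $\partial \Phi/\partial t_{2j}=B_j\Phi$, while the $\tau^{\rm Toda}$-normalization supplies the correct indefinite ``orthonormal'' length.

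I would first introduce the bilinear form
\[
\langle f, g\rangle_{\mathbf t}:=\sum_{k=1}^n c_k\, f(\lambda_k)\, g(\lambda_k)\, e^{2\xi(\lambda_k, \mathbf t)},
\]
so that $m_{ij}(\mathbf t)=\langle \phi_i^0,\phi_j^0\rangle_{\mathbf t}$ by (\ref{Mtoda}) and $\tau_k^{\rm Toda}(\mathbf t)$ is the leading $k\times k$ principal minor of the (symmetric) Gram matrix. Let $\psi_k(\lambda,\mathbf t)$ denote the right-hand side of (\ref{tau-det-form}). Expanding the determinant along its last column writes $e^{-\xi}\psi_k$ as a linear combination of $\phi_1^0,\dots,\phi_k^0$. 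Pairing with $\phi_j^0$ under $\langle\cdot,\cdot\rangle_{\mathbf t}$ produces, for $j<k$, a determinant with two identical columns (hence zero), while for $j=k$ it produces $\tau_k^{\rm Toda}$. Combined with the leading cofactor $\tau_{k-1}^{\rm Toda}$ of $\phi_k^0$ in $e^{-\xi}\psi_k$, this yields
\[
\langle e^{-\xi}\psi_i, e^{-\xi}\psi_j\rangle_{\mathbf t}=c_i\,\delta_{ij},
\]
which in turn implies that the matrix $\Psi(\mathbf t):=(\psi_i(\lambda_j,\mathbf t))_{i,j}$ satisfies $\Psi C\Psi^T=C$, matching the normalization of $\Phi$ in Lemma \ref{Torthogonal}.

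To finish I would show $\Psi=\Phi$. Since multiplication by $\lambda$ is symmetric for $\langle\cdot,\cdot\rangle_{\mathbf t}$, the polynomial family $e^{-\xi}\psi_k$ satisfies a three-term recurrence whose tridiagonal coefficients, after using (\ref{Todatau}), coincide with those of $\tilde T(\mathbf t)$; the $e^{\xi(\lambda,\mathbf t)}$ prefactor passes through trivially since it depends only on $\lambda$, so $\tilde T\Psi=\Psi D$. The correct initial condition $\psi_k(\lambda,\mathbf 0)=\phi_k^0(\lambda)$ follows immediately from $m_{ij}(0)=c_i\delta_{ij}$, and the Toda evolution $\partial \Psi/\partial t_{2j}=B_j\Psi$ then drops out of the uniqueness of the indefinite Gram--Schmidt orthonormalization together with $\partial_{t_{2j}}\langle f,g\rangle_{\mathbf t}=2\langle \lambda^j f, g\rangle_{\mathbf t}$. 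The main obstacle is the signature bookkeeping: the $c_k$'s have mixed signs, so $\tau_k^{\rm Toda}$ may vanish or change sign, and the square roots in the formula must be interpreted accordingly---the identity is valid only on the complement of the singular loci, which are precisely the blow-up times of the indefinite Toda flow noted elsewhere in the paper. A secondary point is that the argument initially gives the identity only at the $n$ eigenvalues $\lambda_j$; but since both sides are $e^{\xi(\lambda,\mathbf t)}$ times a polynomial of degree $\le n-1$ in $\lambda$, agreement at these $n$ points forces agreement identically.
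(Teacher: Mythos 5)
Your proposal is correct in outline, but it runs in the opposite direction from the paper's argument and is worth comparing. The paper's proof is a two-line consequence of the HR-factorization: writing $e^{\xi(\tilde{T}_0,\mathbf{t})}=r(\mathbf{t})h(\mathbf{t})$ gives simultaneously $\Phi(\mathbf{t})=h(\mathbf{t})\Phi_0$ (so orthonormality $\Phi C\Phi^T=C$ is preserved in time, by Lemma \ref{Torthogonal}) and $\Phi(\mathbf{t})=r(\mathbf{t})^{-1}\Phi_0e^{\xi(D,\mathbf{t})}$ (so, $r^{-1}$ being lower triangular, $\phi_k(\cdot,\mathbf{t})$ lies in the span of $\phi_1^0e^{\xi},\dots,\phi_k^0e^{\xi}$). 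These two facts identify $\Phi(\mathbf{t})$ as the Gram--Schmidt orthogonalization of $\{\phi_j^0e^{\xi}\}$ with respect to (\ref{Tproduct}), which is exactly the determinant (\ref{tau-det-form}). You instead take the determinant as the starting point, verify skew of nothing --- rather, verify orthonormality directly by the repeated-column argument (which is fine and standard), and then must separately prove that this orthonormal family coincides with the eigenvector family of $\tilde{T}(\mathbf{t})$ evolving under $\partial\Phi/\partial t_{2j}=B_j\Phi$. That identification is the one place your sketch is thin: the three-term recurrence gives you \emph{some} tridiagonal matrix $J(\mathbf{t})$, and (\ref{Todatau}) only pins down its off-diagonal entries, not the diagonal ones; you would either need the $\tau$-function expression for the diagonal entries as well, or (cleaner) argue that $J(\mathbf{t})$ is isospectral, satisfies the Toda flow, and agrees with $\tilde{T}_0$ at $\mathbf{t}=0$, hence equals $\tilde{T}(\mathbf{t})$ by ODE uniqueness. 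Your closing remarks --- the degree count extending the identity from the $n$ eigenvalues to all $\lambda$, and the caveat that the formula only holds away from the zero locus of the $\tau$-functions (the blow-up times) --- are both sound and the latter is a point the paper leaves implicit. In short: your route is more elementary and self-contained, at the cost of an extra identification step that the factorization argument gets for free.
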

\begin{Proof}
 With the factorization $e^{\xi(\tilde{T}_0,\mathbf{t})}=r(\mathbf{t})h(\mathbf{t})$, we have $\tilde{T}(\mathbf{t})=h(\mathbf{t})\tilde{T}_0h(\mathbf{t})^{-1}$ (see (\ref{Tequation})). Then from Lemma \ref{Torthogonal}, we obtain
 \[
 \Phi(\mathbf{t})=h(\mathbf{t})\Phi_0\,.
 \]
 Now from the factorization with $\tilde{T}\Phi=\Phi D$, we have
 \[
 \Phi(\mathbf{t})=r(\mathbf{t})^{-1}\,e^{\xi(\tilde{T}_0,\mathbf{t})}\,\Phi_0=r(\mathbf{t})^{-1}\,\Phi_0\,e^{\xi(D,\mathbf{t})}\,,
 \]
 which implies
 \begin{equation}\label{span}
 \phi_i(\lambda,\mathbf{t})\in{\rm Span}\left\{\phi_1^0(\lambda)e^{\xi(\lambda,\mathbf{t})},\ldots,\phi^0_i(\lambda)e^{\xi(\lambda,\mathbf{t})}\right\},\quad i=1,\ldots,n\,.
 \end{equation}
 Then using the Gram-Schmidt orthogonalization method with the inner product (\ref{Tproduct}), we obtain the result.
 \end{Proof}

If we set $\phi_1^0(\lambda)=1$ and consider the semi-infinite lattice with $n=\infty$,
$\phi_k(\lambda,\mathbf{t})$  can be expressed in the following elegant form
in terms of only the $\tau$-functions:
\begin{Proposition}\label{prop4.1}
The orthogonal eigenfunctions $\phi_k(\lambda)$ can be expressed in
terms of $\tau$-functions,
\[ \phi_k(\lambda, \mathbf{t}) = \frac{c_{k}^{1/2}e^{\xi(\lambda,\mathbf{t})}}{\left[\tau_{k}^{\rm{ Toda}}(\mathbf{t})\,
  \tau_{k-1}^{\rm{Toda}}(\mathbf{t})\right]^{1/2}}\,
\tau_{k-1}^{\rm{ Toda}}\left( \mathbf{t} -
   \frac{1}{2} [\lambda^{-1}]\right) \lambda^{k-1} \,,
\]
where $\xi(\lambda,\mathbf{t})=\sum_{k=1}^{\infty}\lambda^kt_{2k}$ and  $\tau_j^{\rm{Toda}}\left(\mathbf{t} - \frac{1}{2}
[\lambda^{-1}]\right) = \tau_j^{\rm{ Toda}}\left( t_2 -
\frac{1}{2\lambda}, \,\dots, \,t_{2n}-\frac{1}{2n\lambda^n},\ldots\right) $.
\end{Proposition}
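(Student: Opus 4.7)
The plan is to start from the determinantal formula for $\phi_k(\lambda,\mathbf{t})$ in Lemma \ref{phi} and identify the determinant appearing there with $\lambda^{k-1}\tau_{k-1}^{\rm Toda}(\mathbf{t}-\frac{1}{2}[\lambda^{-1}])$ through a change of basis followed by the classical Miwa shift. First, since $\phi_1^0=1$ and $\tilde T_0\phi^0=\phi^0 D$ with $\tilde T_0$ tridiagonal, the resulting three-term recursion implies by induction that each $\phi_i^0(\mu)$ is a polynomial in $\mu$ of degree $i-1$. In the semi-infinite setting we normalize the eigenvectors so that each $\phi_i^0$ is monic; let $U_k$ denote the lower-triangular transition matrix defined by $\phi_i^0(\mu)=\sum_j(U_k)_{ij}\mu^{j-1}$, so that $\det U_k=1$. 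Bilinearity of the inner product then gives
\[
m_{i,j}(\mathbf{t})=\bigl(U_k\,H(\mathbf{t})\,U_k^T\bigr)_{ij},\qquad H_{l,l'}(\mathbf{t}):=\langle\mu^{l+l'-2}\,e^{2\xi(\mu,\mathbf{t})}\rangle_{\rm Toda}\,.
\]
Performing the corresponding row and column operations on the determinant $D_k$ of Lemma \ref{phi} converts the last column into $(1,\lambda,\ldots,\lambda^{k-1})^T$ and the other columns into Hankel moments:
\[
D_k=\det\!\left[\begin{array}{cccc}H_{1,1}&\cdots&H_{1,k-1}&1\\ \vdots&&\vdots&\vdots\\ H_{k,1}&\cdots&H_{k,k-1}&\lambda^{k-1}\end{array}\right]\,.
\]

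The central calculation is then the Heine--Miwa identity
\[
\det\!\left[\,(H_{l,l'})_{1\le l\le k,\,1\le l'\le k-1}\ \bigm|\ (\lambda^{l-1})_{1\le l\le k}\,\right]=\lambda^{k-1}\det\bigl(H_{l,l'}-\lambda^{-1}H_{l+1,l'}\bigr)_{1\le l,l'\le k-1}\,.
\]
Since the Miwa shift $e^{2\xi(\mu,\mathbf{t}-\frac{1}{2}[\lambda^{-1}])}=e^{2\xi(\mu,\mathbf{t})}(1-\mu/\lambda)$ yields $H_{l,l'}(\mathbf{t}-\frac{1}{2}[\lambda^{-1}])=H_{l,l'}(\mathbf{t})-\lambda^{-1}H_{l+1,l'}(\mathbf{t})$, the right-hand side equals $\lambda^{k-1}\det H_{k-1}(\mathbf{t}-\frac{1}{2}[\lambda^{-1}])$. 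Transforming back through $U_{k-1}$ using $\tau_{k-1}^{\rm Toda}(\mathbf{t}-\frac{1}{2}[\lambda^{-1}])=(\det U_{k-1})^2\det H_{k-1}(\mathbf{t}-\frac{1}{2}[\lambda^{-1}])=\det H_{k-1}(\mathbf{t}-\frac{1}{2}[\lambda^{-1}])$, we obtain $D_k=\lambda^{k-1}\tau_{k-1}^{\rm Toda}(\mathbf{t}-\frac{1}{2}[\lambda^{-1}])$. Substitution into the prefactor of Lemma \ref{phi} gives the claimed formula.

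The main obstacle is the Heine--Miwa identity itself. One proves it by Laplace expansion of the LHS along the last column: the coefficient of $\lambda^{l-1}$ is, up to sign, the $(k-1)\times(k-1)$ Hankel minor obtained by deleting the $l$-th row. Multiplying the RHS out and collecting powers of $\lambda^{-1}$ exhibits exactly these minors with matching signs, by multilinearity of the determinant in its rows; the only bookkeeping is tracking the shift of row indices. Alternatively, both sides are polynomials in $\lambda$ of degree $k-1$, and equality of their coefficients reduces to classical Plücker-type relations among the minors of the $k\times(k-1)$ Hankel array. Once this identity is in hand, the rest of the argument is mechanical.
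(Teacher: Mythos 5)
Your proposal is correct and follows essentially the same route as the paper: reduce the moments to Hankel form by replacing the $\phi_j^0$ with monomials, apply the Miwa shift $h_j(\mathbf{t}-\tfrac{1}{2}[\lambda^{-1}])=h_j(\mathbf{t})-\lambda^{-1}h_{j+1}(\mathbf{t})$, and identify the bordered Hankel determinant from Lemma \ref{phi} with $\lambda^{k-1}\tau^{\rm Toda}_{k-1}(\mathbf{t}-\tfrac{1}{2}[\lambda^{-1}])$. Your ``Heine--Miwa identity'' is exactly the determinant identity the paper invokes (and is most quickly seen by subtracting $\lambda^{-1}$ times each row from the one above it, which annihilates the last column except for its bottom entry); your explicit change-of-basis matrix $U_k$ just makes precise the paper's appeal to the span relation (\ref{span}).
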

\begin{Proof}
This proposition appears in \cite{adler:95} for the case $c_k = 1$.  In this more general case the proof follows from formula (\ref{tau-det-form}) with $\xi(\lambda,\mathbf{t})=\sum_{k=1}^{\infty}\lambda^kt_{2k}$: First we note that using (\ref{span}), one can replace $\phi_j^0(\lambda)$ in (\ref{tau-det-form}) by $\lambda^{j-1}$ with $\phi^0_1(\lambda)=1$, which gives $m_{i,j}(\mathbf{t})=\langle \lambda^{i+j-2}e^{2\xi(\lambda,\mathbf{t})}\rangle_{\rm Toda}$. This implies that $m_{i,j}(\mathbf{t})=\partial^{i+j-2} m_{1,1}(\mathbf{t})/\partial t_1^{i+j-2}$, hence $\tau^{\rm Toda}_k(\mathbf{t})$ is given by the Hankel determinant form,
$\tau^{\rm Toda}_k=|(h_{i+j-1})_{1\le i,j\le k}|$ with $h_{i+j-1}:=m_{i,j}$.
Since $h_j(\mathbf{t})$ is a linear combination of the exponential function $E_i(\mathbf{t})=e^{2\xi(\lambda_i,\mathbf{t})}$ and $E_i(\mathbf{t}-\frac{1}{2}[\lambda^{-1}])=(1-\frac{\lambda_i}{\lambda})E_i(\mathbf{t})$, we have
\[
h_{j}\left(\mathbf{t}-\frac{1}{2}[\lambda^{-1}]\right)=h_j(\mathbf{t})-\frac{1}{\lambda}h_{j+1}(\mathbf{t})\,.
\]
 Then it is easy to see that
\[
\tau_k^{\rm Toda}\left(\mathbf{t}-\frac{1}{2}[\lambda^{-1}]\right)\lambda^k
= \left|\begin{array}{ccccc}
h_1(\mathbf{t}) & h_2(\mathbf{t}) & \cdots & h_k(\mathbf{t}) & 1 \\
h_2(\mathbf{t}) & h_3(\mathbf{t}) & \cdots & h_{k+1}(\mathbf{t}) & \lambda \\
\vdots     &     \vdots  &  \ddots     &   \vdots    &     \vdots  \\
h_{k}(\mathbf{t}) &  h_{k+1}(\mathbf{t}) &  \cdots  &  h_{2k-1}(\mathbf{t}) & \lambda^{k-1}\\
h_{k+1}(\mathbf{t}) & h_{k+2}(\mathbf{t})& \cdots  &h_{2k}(\mathbf{t}) &\lambda^k
\end{array}\right|\,,
\]
which gives the determinant in (\ref{indefinite-tau}) in terms of $\tau_{k-1}^{\rm Toda}$.

 For a finite $n$, one can see that  $\tau^{\rm Toda}_n(\mathbf{t}-\frac{1}{2}[\lambda^{-1}])\lambda^n/\tau^{\rm Toda}_n(\mathbf{t})$
is proportional to the polynomial $\prod_{k=1}^n(\lambda-\lambda_k)$, and we have $\phi_{n+1}(\lambda)=0$, which is just the characteristic polynomial of $L$.
\end{Proof}

With Proposition \ref{prop4.1}, we will show a further example of the close
relation between the orthogonal functions in the indefinite Toda lattice
and the skew-orthogonal functions in the Pfaff lattice for the symplectic matrix of (\ref{BG.L0form}).

\subsection{The $\tau$-functions of the Pfaff and Toda lattices}

We show here that the $\tau$-functions which
generate the solutions of the Pfaff lattice equations are equivalent
to the $\tau$-functions which generate the solutions of the
indefinite Toda lattice hierarchy.
Let us recall that the $\tau$-functions of the Pfaff lattice are defined by (\ref{tau}), i.e.
\[
\tau_{2k}={\rm pf}\,(M_{2k})\,,
\]
where $M_{2k}$ is the $2k\times 2k$ upper left submatrix of $2n\times 2n$ skew-symmetric matrix $M$ given by
\[
M:={g} \,J\,{g}^T\qquad {\rm with}\qquad {g}=e^{\xi(L^2_0,\mathbf{t})}\,,
\]
where $\xi(L_0^2,\mathbf{t})=\sum_{k=1}^{n} t_{2k}L_0^{2k}$ with the initial matrix $L(0)=L_0$.
Then our goal is to show the following Theorem:
\begin{Theorem}\label{PTtau}
The $\tau$ functions of the even-flows of the Pfaff lattice hierarchy with the initial matrix $L(0)$
in the $H^{S}$-tridiagonal form of (\ref{BG.L0form}) are related to the $\tau$-functions of
the indefinite Toda lattice hierarchy by
\[ \tau_{2k}= \frac{1}{c_1\cdots c_k}\,\, \tau_k^{\rm{Toda}}\,. \]
\end{Theorem}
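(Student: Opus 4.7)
The plan is to express both $\tau$-functions as minors of the same $n\times n$ matrix and then compare them. Let $E_{ab}$ denote the $2\times 2$ matrix unit with $1$ in position $(a,b)$, so that $J=I_n\otimes(E_{12}-E_{21})$. Because the odd Pfaff flows are trivial (Lemma \ref{lemma-3.1}), only the even times enter, giving $g(\mathbf{t})=\exp(\sum_{k\geq 1} t_{2k}L_0^{2k})$. The tensor decomposition (\ref{L2Tij}) together with the orthogonality of the idempotents $E_{11}$ and $E_{22}$ yields $L_0^{2k}=\tilde{T}_0^k\otimes E_{11}+(\tilde{T}_0^k)^T\otimes E_{22}$, and exponentiating term by term produces
\[ g(\mathbf{t})=\tilde{g}(\mathbf{t})\otimes E_{11}+\tilde{g}(\mathbf{t})^T\otimes E_{22}, \]
where $\tilde{g}(\mathbf{t}):=\exp(\sum_k t_{2k}\tilde{T}_0^k)$ is precisely the HR-exponential driving the indefinite Toda hierarchy.

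Next I will compute $M=gJg^T$ by block multiplication. Using $E_{ab}E_{cd}=\delta_{bc}E_{ad}$ together with $E_{11}(E_{12}-E_{21})=E_{12}$ and $E_{22}(E_{12}-E_{21})=-E_{21}$, a short calculation gives $M=\tilde{g}^2\otimes E_{12}-(\tilde{g}^2)^T\otimes E_{21}$, so each $(i,j)$ block of $M$ has zero diagonal and off-diagonal entries $(\tilde{g}^2)_{ij}$ and $-(\tilde{g}^2)_{ji}$. To evaluate $\mathrm{pf}(M_{2k})$ I conjugate by the permutation $\sigma$ on $\{1,\dots,2k\}$ sending $2i-1\mapsto i$ and $2i\mapsto k+i$; this rearranges $M_{2k}$ into the standard block-antidiagonal form with $(\tilde{g}^2)_k$ in the upper right and $-((\tilde{g}^2)_k)^T$ in the lower left. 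A direct inversion count gives $\mathrm{sign}(\sigma)=(-1)^{k(k-1)/2}$, which exactly cancels the prefactor in the standard Pfaffian-determinant identity for block-antidiagonal skew matrices, so
\[ \tau_{2k}=\mathrm{pf}(M_{2k})=\det\bigl((\tilde{g}^2)_k\bigr). \]

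The last step identifies this with the Toda minor via a hidden-$C$ conjugation. Since $\tilde{T}_0=CT_0$ with $T_0$ symmetric, one has $\tilde{T}_0^T=T_0C=C^{-1}\tilde{T}_0C$; this conjugation relation passes through the power series defining $\tilde{g}$, giving $\tilde{g}^T=C^{-1}\tilde{g}C$ and hence $M^{\mathrm{Toda}}=\tilde{g}C\tilde{g}^T=\tilde{g}\cdot\tilde{g}C=\tilde{g}^2 C$. Because $C$ is diagonal, $(\tilde{g}^2 C)_k=(\tilde{g}^2)_k C_k$ and $\det C_k=c_1\cdots c_k$, so $\tau_k^{\mathrm{Toda}}=(c_1\cdots c_k)\det((\tilde{g}^2)_k)=(c_1\cdots c_k)\,\tau_{2k}$, which is the claimed identity. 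The most delicate point is the sign accounting in the Pfaffian-permutation step; once that cancellation is verified, the remainder of the argument is a straightforward unpacking of the definitions.
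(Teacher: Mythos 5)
Your proposal is correct and follows essentially the same route as the paper's proof: both write $g=\tilde{g}\otimes E_{11}+\tilde{g}^T\otimes E_{22}$, compute $M=gJg^T$ in tensor form, shuffle $M_{2k}$ by the same odd/even permutation to reduce the Pfaffian to $\det\bigl((\tilde g^2)_k\bigr)$, and then pull out $\det C_k=c_1\cdots c_k$ via the relation $C\tilde g^T=\tilde g C$. The only cosmetic differences are that the paper works with $H=C^{-1}M^{\rm Toda}=(\tilde g^2)^T$ (a transpose that is immaterial after taking determinants of principal submatrices) and absorbs your sign bookkeeping into the identity $\mathrm{pf}(A^TBA)=\det(A)\,\mathrm{pf}(B)$ applied to $P^TJP$ with $\mathrm{pf}(J)=1$.
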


\begin{Proof}
We first note from (\ref{L2Tij}) that the entries of $g=e^{\xi(L_0^2,\mathbf{t})}$ are expressed by those of
$\tilde{g}=e^{\xi(\tilde{T}_0,\mathbf{t})}$,
\[
g=\tilde{g}\otimes\left(\begin{array}{cc}1&0\\0&0\end{array}\right)\,+\,
\tilde{g}^T\otimes \left(\begin{array}{cc}0&0\\0&1\end{array}\right)\,.
\]
Since $\tilde{T}_0=CT_0$ with a symmetric matrix $T_0$,
$C^{-1}\tilde{g}$ is symmetric, i.e. $C^{-1}\tilde{g}=\tilde{g}^TC^{-1}$.
Then for $H = C^{-1}M^{\rm{Toda}}=C^{-1} \tilde{g} C \tilde{g}^T=\tilde{g}^T\tilde{g}^T$ and  $M = {g} J {g}^T$,  we have
\begin{equation}\label{mij}
M=H\otimes \left(\begin{array}{cc}0&1\\0&0\end{array}\right)\,+\,
H^T\otimes \left(\begin{array}{cc}0&0\\-1&0\end{array}\right)\,.
\end{equation}

Let $H_k$ be the $k\times k$ upper left submatrix of $H$, and $M_{2k}$ be the submatrix defined above.
 Then with the permutation matrix
\[ P = \left[ e_1, e_3, \dots, e_{2k-1}, e_{2}, e_4, \dots, e_{2k}
\right]\,,\]
where $e_k$'s are the standard basis (column) vectors on $\mathbb{R}^{2k}$,
one can transform $M_{2k}$ to
\begin{equation}\label{factor}
 P^{T} M_{2k} P = \left(\begin{array}{cc} 0_k &  H_k \\ -H_k^T &
0_k \end{array}\right) =
\left(\begin{array}{cc} H_k & 0_k \\ 0_k & I_k \end{array}\right) P^T J P
\left(\begin{array}{cc} H_k^T & 0_k \\ 0_k & I_k \end{array}\right)
\,. \end{equation}
We recall the Pfaffian identity for a skew-symmetric matrix $B$,
\[ \mbox{pf}( A^T B A ) = \mbox{det}(A) \mbox{pf}(B) \,, \]
which with (\ref{factor}) implies that
\[ \mbox{det}(P) \mbox{pf}(M_{2k}) = \mbox{det}(P)
\mbox{det}\left(\begin{array}{cc} H_k & 0_k \\ 0_k & I_k \end{array}\right)
\mbox{pf}(J)\,. \]
We finish the proof by noting that $P$ is invertible and $\mbox{pf}(J)
= 1$ giving
\[ \mbox{pf}(M_{2k}) = \mbox{det}\left(\begin{array}{cc} H_k & 0_k \\ 0_k &
  I_k \end{array}\right) = \mbox{det}( H_k)=\frac{1}{c_1\cdots c_k}\,{\rm det}(M_k^{\rm Toda})\,. \]
\end{Proof}

\subsection{Skew-orthogonal and orthogonal functions}
We show here that the eigenvectors of the matrix $L$ in the form
(\ref{BG.L0form}) define a family of skew-orthogonal functions,
and then discuss an explicit relation between those skew-orthogonal
functions and the orthogonal functions appearing in the
indefinite Toda lattice, i.e. those in Proposition \ref{prop4.1}.

As in the case of the Toda lattice, the Pfaff lattice hierarchy (\ref{pfaffhierarchy})
is given by the compatibility of the linear equations,
\begin{equation}\label{LBequations}
L\Psi=\Psi\Lambda \qquad{\rm and}\qquad \frac{\partial \Psi}{\partial t_{2j}} =\pi_{\mathfrak{sp}}(L^{2j})\,\Psi\,,
\end{equation}
where $L$ is symplectic and in the form of (\ref{BG.L0form}), and
$\Lambda$ is the eigenvalue matrix,
\[
\Lambda:=D\otimes \left(\begin{array}{cc}-1&0\\0&1\end{array}\right)\,,
\]
with $D={\rm diag}(z_1,\ldots,z_n)$. (Recall that the
eigenvalues of $L$ consist of the pairs $(z_k,-z_k)$ for
$k=1,\ldots,n$, and we assume here that they are all distinct.) Then
we have:
\begin{Lemma}\label{SOrelation}
The eigenmatrix $\Psi$ satisfies the skew-orthogonal relation,
\[
\Psi^T\,J\,\Psi=\mathcal{K}\,J\,,
\]
with a diagonal matrix with nonzero constants $\kappa_k$'s in the form,
\[
\mathcal{K}={\rm diag}_2\left(\kappa_1I_2,\kappa_2I_2,\ldots,\kappa_nI_2\right)\,.
\]
\end{Lemma}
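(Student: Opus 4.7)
The plan is to derive a matrix equation for $K:=\Psi^T J\Psi$ using the symplectic nature of $L$ and the eigenvalue equation, then read off its block structure from that of $\Lambda$.

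First I would combine the symplectic relation $JL+L^T J=0$ (equivalently $L^T J=-JL$) with the two equations in (\ref{LBequations}). Starting from $L\Psi=\Psi\Lambda$, taking transpose yields $\Psi^T L^T=\Lambda^T\Psi^T$. Right-multiplying by $J\Psi$ and using $L^T J=-JL$ gives
\[
\Lambda^T\Psi^T J\Psi=\Psi^T L^T J\Psi=-\Psi^T J L\Psi=-\Psi^T J\Psi\,\Lambda.
\]
Since $\Lambda$ is diagonal, $\Lambda^T=\Lambda$, so setting $K=\Psi^T J\Psi$ we obtain the Sylvester-type relation
\[
K\Lambda+\Lambda K=0.
\]

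Next I would read this relation entrywise. Index the rows/columns of $K$ by pairs $(k,\epsilon)$ with $k=1,\dots,n$ and $\epsilon\in\{1,2\}$ in agreement with the $2\times 2$ block structure of $\Lambda=D\otimes\text{diag}(-1,1)$. Then $\Lambda_{(k,1)}=-z_k$ and $\Lambda_{(k,2)}=z_k$, and the Sylvester relation forces $K_{(i,\epsilon)(j,\eta)}=0$ whenever $\Lambda_{(i,\epsilon)}+\Lambda_{(j,\eta)}\neq 0$. Under the standing assumption that the pairs $(z_k,-z_k)$ are distinct, the only way this sum vanishes is $i=j$ with $\epsilon\neq\eta$. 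Consequently $K$ is $2\times 2$ block diagonal, with the $k$-th block supported only on its anti-diagonal.

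Then I would exploit the skew-symmetry $K^T=\Psi^T J^T\Psi=-\Psi^T J\Psi=-K$ inherited from $J^T=-J$. Combined with the previous step, this forces the $k$-th block of $K$ to have the form
\[
\begin{pmatrix}0 & \kappa_k\\ -\kappa_k & 0\end{pmatrix}=\kappa_k\begin{pmatrix}0 & 1\\ -1 & 0\end{pmatrix},
\]
i.e.\ $\kappa_k$ times the $k$-th block of $J$. Assembling the blocks yields exactly $K=\mathcal{K}J$ with $\mathcal{K}=\text{diag}_2(\kappa_1 I_2,\dots,\kappa_n I_2)$. Nonvanishing of each $\kappa_k$ follows because $\Psi$ is invertible (distinct eigenvalues give independent eigenvectors) and $J$ is invertible, so $K$ is invertible and hence every diagonal block must be nonsingular.

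The main obstacle, though a mild one, is the bookkeeping of the block structure: one must keep track of how the two indices of each $2\times 2$ block of $\Lambda$ pair up $\pm z_k$, and verify that no spurious off-diagonal blocks of $K$ survive, which is precisely where the assumption that the eigenvalue pairs $(z_k,-z_k)$ are distinct enters. Once that accounting is done, the skew-symmetry of $K$ does the remaining work to produce the clean form $\mathcal{K}J$.
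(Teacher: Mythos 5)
Your argument is essentially the paper's own: the Sylvester-type relation $K\Lambda+\Lambda K=0$ is precisely the matrix form of the identity $(z_k+z_j)\,\psi(z_k)^TJ\psi(z_j)=0$ that the paper derives eigenvector by eigenvector from $JL+L^TJ=0$, and the subsequent bookkeeping (block-diagonal support from distinctness of the pairs $(z_k,-z_k)$, then skew-symmetry forcing each block to be $\kappa_k$ times a block of $J$) is the same. Your justification that each $\kappa_k\neq 0$, via invertibility of $\Psi$ and of $J$, is in fact spelled out more explicitly than in the paper, which simply asserts it.

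One step is missing, however. The lemma asserts that the $\kappa_k$ are \emph{constants}, i.e.\ independent of the time variables, whereas your argument only establishes the form $\Psi^TJ\Psi=\mathcal{K}(\mathbf{t})\,J$ at each fixed time. The paper closes this by using the second equation of (\ref{LBequations}): with $B:=\pi_{\mathfrak{sp}}(L^{2j})$ one computes $\partial_{t_{2j}}(\Psi^TJ\Psi)=\Psi^T(B^TJ+JB)\Psi=0$, since $B\in\mathfrak{sp}(n)$. Your proof never invokes the evolution equation for $\Psi$, so this invariance does not follow from what you wrote; it is a one-line addition, but a needed one.
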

\begin{Proof}
  From (\ref{LBequations}) with $B:=\pi_{\mathfrak{sp}}(L^{2j})$ and $t=t_{2j}$, we have
\begin{eqnarray*}
\frac{\partial}{\partial t}(\Psi^TJ\Psi ) &= \Psi^TB^TJ\Psi+\Psi^TJB\Psi \\
    &= \Psi^T(B^TJ+JB)\Psi=0\,,
\end{eqnarray*}
where we have used $B\in \mathfrak{sp}(n)$.

Now we note the following equation for the eigenvectors $\psi(z_k)$,
\begin{eqnarray*}
z_j\psi(z_k)^TJ\psi(z_j) &= \psi(z_k)^TJL\psi(z_j) \\
        &=- \psi(z_k)^TL^TJ\psi(z_j)\,  \qquad (\because~ JL+L^TJ=0) \\
        &= -(L\psi(z_k))^TJ\psi(z_j) \\
        &=-z_k\psi(z_k)^TJ\psi(z_j)\,.
\end{eqnarray*}
This implies
\[
(z_k+z_j)\psi(z_k)^TJ\psi(z_j)=0\,.
\]
Noting the order of the eigenvalues in $\Lambda={\rm
diag}(-z_1,z_1,\ldots, -z_n,z_n)$, so that $\Psi^T J \Psi$, which is
skew-symmetric, is a multiple of  $J$ of the form $\mathcal{K}$ with nonzero
constant $\kappa_k$'s.
\end{Proof}
 From Lemma \ref{SOrelation}, we have
 \begin{equation}\label{SOIrelation}
 \Psi\,\mathcal{K}^{-1}\,J\,\Psi^T=J\,,
 \end{equation}
 which implies the skew-orthogonality relation for the functions $\psi_k(z)$ in the eigenvector
 $\psi(z)=(\psi_0(z),\psi_1(z),\ldots,\psi_{2n-1}(z))$ of $L$,
\begin{equation}\label{SO}
\left\{\begin{array}{lllll}
\langle \psi_{2j}, \psi_{2k} \rangle=\langle \psi_{2j+1}, \psi_{2k+1} \rangle=0,\\[0.4ex]
  \langle \psi_{2j},\psi_{2k+1} \rangle=\langle\psi_{2j-1},\psi_{2k}\rangle =\delta_{jk},  \end{array}\right.
   \quad
  {\rm for}\quad  j\le k\,.
\end{equation}
Here the inner product $\langle f,g\rangle$ is defined by
\begin{equation}\label{skew-inner}
\langle f, g\rangle:=\sum_{k=1}^n\left[f(-z_k)g(z_k)-f(z_k)g(-z_k)\right]\kappa_k^{-1}\,.
\end{equation}

Now we claim:
\begin{Theorem}\label{SO-O relation}
The skew-orthogonal eigenfunctions
 $\psi_k(z,\mathbf{t})$ satisfying (\ref{SO}) can be given by
\[\left\{\begin{array}{lll}
\psi_{2k}(z,\mathbf{t})=\phi_{k+1}(\lambda,\mathbf{t})\\
{}\\
\psi_{2k+1}(z,\mathbf{t})= c_{k+1}^{-1}
z\,\phi_{k+1}(\lambda,\mathbf{t})
\end{array}\right. \qquad {\rm with}\quad \lambda=z^2\,,
\]
where $\phi_k(\lambda,\mathbf{t})$'s are the orthogonal eigenfunctions appearing in the
indefinite Toda lattice (see Proposition \ref{prop4.1}), and the measure of the inner
product $\langle\cdot,\cdot\rangle$ is given by $\kappa_k=2z_kc_k^{-1}$.
\end{Theorem}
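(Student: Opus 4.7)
\begin{Proof}
The plan is to exploit the block structure of $L$ in (\ref{BG.L0form}) together with the identity (\ref{L2Tij}) for $L^2$ to reduce the eigenvalue problem for $L$ to the one for $\tilde T$, and then to recognize the resulting eigenvector components as the orthogonal functions of the indefinite Toda lattice given in Proposition \ref{prop4.1}.

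Writing the eigenvalue equation $L\psi(z)=z\psi(z)$ componentwise with $L$ in the form (\ref{BG.L0form}), the even-indexed rows of $L$ (those containing only a single $c_{k+1}$ entry) give the trivial relations
\[
c_{k+1}\,\psi_{2k+1} \;=\; z\,\psi_{2k}, \qquad k=0,\ldots,n-1,
\]
hence $\psi_{2k+1}=c_{k+1}^{-1}z\,\psi_{2k}$. Substituting these into the odd-indexed rows (which involve the $d_k$'s and $a_k$'s) and multiplying through by the appropriate $c_k$ produces, for the vector $\tilde\psi:=(\psi_0,\psi_2,\ldots,\psi_{2n-2})^T$, exactly the eigenvalue problem $\tilde T\,\tilde\psi=\lambda\,\tilde\psi$ with $\lambda=z^2$ and $\tilde T$ as in (\ref{T}). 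This is in fact forced by (\ref{L2Tij}): applying $L^2$ to $\psi$ and reading off the first slot of each $2\times 2$ block gives $\tilde T$ acting on $\tilde\psi$. Thus, after matching the initial normalization $\psi_0(z,0)=\phi_1^0(\lambda)$, the proportionality becomes equality and $\psi_{2k}(z,\mathbf{t})=\phi_{k+1}(\lambda,\mathbf{t})$, and correspondingly $\psi_{2k+1}(z,\mathbf{t})=c_{k+1}^{-1}z\,\phi_{k+1}(\lambda,\mathbf{t})$.

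To confirm this identification is preserved by the hierarchy, one uses (\ref{2mgenerator}). Splitting the linear system $\partial_{t_{2j}}\psi=\pi_{\mathfrak{sp}}(L^{2j})\psi$ by the parity of the index yields, on the even components, precisely the Toda linear system $\partial_{t_{2j}}\phi=B_j\phi$ driving the $\phi_k$'s, while the odd components are transported by the algebraic relation $\psi_{2k+1}=c_{k+1}^{-1}z\,\psi_{2k}$, which is preserved along the flow.

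For the skew-orthogonality (\ref{SO}), substitute into (\ref{skew-inner}). Since $\phi_{k+1}$ depends only on $\lambda=z^2$, the integrand for the even--even and odd--odd pairings is manifestly invariant under $z_l\mapsto -z_l$, so those pairings vanish. For a mixed-parity pairing, the two terms in (\ref{skew-inner}) add rather than cancel and one obtains
\[
\langle\psi_{2j},\psi_{2k+1}\rangle \;=\; c_{k+1}^{-1}\sum_{l=1}^{n}2z_l\,\phi_{j+1}(\lambda_l)\,\phi_{k+1}(\lambda_l)\,\kappa_l^{-1}.
\]
Imposing $2z_l\kappa_l^{-1}=c_l$, that is $\kappa_l=2z_lc_l^{-1}$, converts the right-hand side into the Toda inner product (\ref{Tproduct}), which by Lemma \ref{Torthogonal} gives $c_{k+1}^{-1}\,c_{j+1}\,\delta_{jk}=\delta_{jk}$. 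The mixed-parity calculation thereby both pins down the weights $\kappa_l$ and delivers the nontrivial normalization in (\ref{SO}); the most delicate step is matching the normalizations so that the measure $\mathcal K$ of Lemma \ref{SOrelation} coincides with $\mathrm{diag}_2(\kappa_k I_2)$ under the prescription $\kappa_k=2z_kc_k^{-1}$, which is exactly what the even--odd calculation forces.
\end{Proof}
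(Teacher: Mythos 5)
Your proof is correct, but it reaches the identification $\psi_{2k}=\phi_{k+1}$, $\psi_{2k+1}=c_{k+1}^{-1}z\,\phi_{k+1}$ by a genuinely different route than the paper. The paper substitutes its $\tau$-function relation $\tau_{2k}=(c_1\cdots c_k)^{-1}\tau_k^{\rm Toda}$ (Theorem \ref{PTtau}) into the explicit $\tau$-function formulas for the skew-orthogonal functions (Proposition \ref{prop4.2}) and matches them against Proposition \ref{prop4.1}; you instead read the identification directly off the linear problem (\ref{LBequations}), using the sparse rows of (\ref{BG.L0form}) to get $c_{k+1}\psi_{2k+1}=z\psi_{2k}$ and the block identity (\ref{L2Tij}) to reduce the even components to the $\tilde T$-eigenvalue problem, then invoking (\ref{2mgenerator}) to propagate the identification in $\mathbf{t}$. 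Your approach is more elementary and self-contained — it does not lean on Proposition \ref{prop4.2}, which the paper imports from Adler--van Moerbeke and extends to general $c_k$ without proof — and it is essentially the ``direct proof'' the authors allude to at the end of the subsection, realized through the eigenvector equations rather than the moment matrix. What it does not deliver is the explicit $\tau$-function expression of the $\psi_k$'s, which is part of the paper's point. Two small places where you assert rather than compute: the claim that the relation $\psi_{2k+1}=c_{k+1}^{-1}z\,\psi_{2k}$ is preserved by the flow amounts to the skew-symmetry of $B_jC$ (equivalently $C^{-1}B_j+B_j^TC^{-1}=0$, which follows from $\tilde T^jC=C(\tilde T^j)^T$ and the fact that $C$ is diagonal), or more simply from the fact that the eigenvalue equation $L\psi=z\psi$ persists along the Lax flow; and the proportionality-to-equality step needs the observation that matching $\psi_0(\cdot,0)=\phi_1^0$ fixes the whole even part through the three-term recursion. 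The skew-orthogonality verification is the same computation as the paper's, and like the paper you check only the even--odd pairing explicitly; note that with these formulas $\langle\psi_{2j-1},\psi_{2k}\rangle=-\delta_{j-1,k}$, consistent with $\Psi\mathcal{K}^{-1}J\Psi^T=J$ rather than with the sign as literally printed in (\ref{SO}).
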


Before proving the Theorem, we recall the following Proposition which gives
the skew-orthogonal  functions
appearing in the semi-infinite Pfaff lattice:
\begin{Proposition}\label{prop4.2}
The eigenvector $\psi(z,\mathbf{t})=(\psi_0(z,\mathbf{t}),\,\psi_1(z,\mathbf{t}),\,\ldots)^T$ for the matrix $L$ of (\ref{L}) with $n=\infty$ can be expressed in
terms of $\tau$-functions,
\[
\left\{ \begin{array} {llll}
\psi_{2k}(z, \mathbf{t}) =\displaystyle{
\frac{e^{\xi(z,\mathbf{t})}}{\left[\tau_{2k}(\mathbf{t})
\tau_{2k+2}(\mathbf{t})\right]^{1/2}}
\tau_{2k}(\mathbf{t}- [z^{-1}] ) z^{2k}} \\
{}\\
\psi_{2k+1}(z, \mathbf{t}) = \displaystyle{ \frac{c_{k+1}^{-1}e^{\xi(z,\mathbf{t})}
}{\left[\tau_{2k}(\mathbf{t})
  \tau_{2k+2}(\mathbf{t})\right]^{1/2}} \left( z +
\frac{\partial}{\partial t_1} \right) \tau_{2k}( \mathbf{t} -
     [z^{-1}]) z^{2k}\,,}
\end{array} \right.
\]
where $\xi(z,\mathbf{t})=\sum_{k=1}^{\infty}z^kt_k$ and $\tau_{2k}(\mathbf{t} - [z^{-1}]) =
\tau_{2k}(t_1 - \frac{1}{z},\, t_2 - \frac{1}{2z}, \,\dots )$.
\end{Proposition}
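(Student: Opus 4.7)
The plan is to first exploit the block structure of $L$ and of $L^{2}$ to reduce the eigenvalue problem for $L$ to the one for $\tilde T$, and then to verify the skew-orthogonality by a direct parity computation using the Toda orthogonality from Lemma \ref{Torthogonal}. Reading off $L\psi(z)=z\psi(z)$ from the $H^{S}$-tridiagonal form (\ref{BG.L0form}), the rows containing only the single non-zero entry $c_{k+1}$, namely the first row of each $2\times 2$ diagonal block, immediately give
\[
c_{k+1}\,\psi_{2k+1}(z)=z\,\psi_{2k}(z),\qquad k=0,1,\ldots,n-1,
\]
which is the claimed relation $\psi_{2k+1}=c_{k+1}^{-1}z\,\psi_{2k}$. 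Since $L^{2}$ admits the decomposition (\ref{L2Tij}), the identity $L^{2}\psi=z^{2}\psi$ projected onto even-indexed components shows that $(\psi_{0},\psi_{2},\ldots,\psi_{2n-2})^{T}$ satisfies the tridiagonal eigenvalue equation $\tilde T v=\lambda v$ with $\lambda=z^{2}$. Consequently $\psi_{2k}(z,\mathbf{t})$ is a scalar multiple of the Toda eigenfunction $\phi_{k+1}(\lambda,\mathbf{t})$.

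Next I would pin the proportionality constant to one by comparing explicit $\tau$-function formulas. Using Proposition \ref{prop4.2} for $\psi_{2k}$ and Proposition \ref{prop4.1} for $\phi_{k+1}$, the fact that only even times survive on symplectic data identifies $\xi(z,\mathbf{t})=\sum_k z^kt_k$ with $\xi(\lambda,\mathbf{t})=\sum_k\lambda^kt_{2k}$, and $\tau_{2k}(\mathbf{t}-[z^{-1}])$ with $\tau_{2k}(\mathbf{t}-\frac{1}{2}[\lambda^{-1}])$. Substituting $\tau_{2k}=(c_{1}\cdots c_{k})^{-1}\tau_{k}^{\mathrm{Toda}}$ from Theorem \ref{PTtau} into the prefactor $[\tau_{2k}\tau_{2k+2}]^{-1/2}$ produces exactly the factor $c_{k+1}^{1/2}[\tau_{k}^{\mathrm{Toda}}\tau_{k+1}^{\mathrm{Toda}}]^{-1/2}$ appearing in Proposition \ref{prop4.1}, so the two expressions collapse and yield $\psi_{2k}=\phi_{k+1}$ with no extra constant.

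Finally, to verify the skew-orthogonality (\ref{SO}) I would do a direct parity argument. Under $z\mapsto -z$ the functions $\psi_{2j}(z)=\phi_{j+1}(z^{2})$ are even while $\psi_{2j+1}(z)=c_{j+1}^{-1}z\,\phi_{j+1}(z^{2})$ are odd. Substituted into the bracket (\ref{skew-inner}), the two summands cancel for same-parity pairs, giving $\langle\psi_{2j},\psi_{2k}\rangle=\langle\psi_{2j+1},\psi_{2k+1}\rangle=0$ at once. For mixed parity the two summands add, producing
\[
\langle\psi_{2j},\psi_{2k+1}\rangle=2c_{k+1}^{-1}\sum_{m=1}^{n}z_{m}\,\phi_{j+1}(\lambda_{m})\phi_{k+1}(\lambda_{m})\,\kappa_{m}^{-1},
\]
and the prescribed choice $\kappa_{m}=2z_{m}c_{m}^{-1}$ collapses this to $c_{k+1}^{-1}\langle\phi_{j+1},\phi_{k+1}\rangle_{\mathrm{Toda}}=c_{k+1}^{-1}c_{j+1}\delta_{jk}=\delta_{jk}$, as required. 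The main technical worry is the normalization step; once the $\tau$-function identity of Theorem \ref{PTtau} is invoked, however, matching the two formulas reduces to bookkeeping and no hidden constants remain.
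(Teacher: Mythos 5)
Your argument does not prove the stated Proposition; it proves a different result. Proposition \ref{prop4.2} asserts a $\tau$-function representation of the wave vector for the \emph{general} lower Hessenberg matrix $L$ of (\ref{L}) with $n=\infty$ --- it is not restricted to the symplectic $H^S$-tridiagonal case, and the paper invokes it precisely in order to \emph{deduce} Theorem \ref{SO-O relation} from it. What you have written is, almost step for step, the paper's proof of Theorem \ref{SO-O relation}: you specialize $L$ to the form (\ref{BG.L0form}), derive $\psi_{2k+1}=c_{k+1}^{-1}z\,\psi_{2k}$ and $\psi_{2k}=\phi_{k+1}(z^2)$, match normalizations via Theorem \ref{PTtau}, and check the skew-orthogonality (\ref{SO}) by the parity argument. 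That is a correct argument for the \emph{other} theorem, but it cannot stand as a proof of Proposition \ref{prop4.2}: in your second paragraph you explicitly write ``Using Proposition \ref{prop4.2} for $\psi_{2k}$,'' i.e.\ you assume the very formula you are supposed to establish. The circularity is not repairable within your outline, because nothing in your argument produces the Miwa-shifted $\tau$-function $\tau_{2k}(\mathbf{t}-[z^{-1}])$ or the operator $z+\partial/\partial t_1$ from first principles.

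A genuine proof has to work at the level of the skew-symmetric moment matrix $M=gJg^T$: one expresses $\psi_{2k},\psi_{2k+1}$ as Pfaffians of bordered submatrices of $M$ (the skew analogue of the Gram--Schmidt determinant in Lemma \ref{phi}), and then shows that those bordered Pfaffians are reproduced by the shift $\mathbf{t}\mapsto \mathbf{t}-[z^{-1}]$ applied to $\tau_{2k}=\mathrm{pf}(M_{2k})$, the odd member requiring the extra derivative $\partial/\partial t_1$ to raise the polynomial degree by one. This is the content of Theorem 3.2 of the Adler--van Moerbeke paper, which the present paper simply cites for $c_k=1$, adding only the observation that the Casimir $L_{2k+1,2k}=c_{k+1}$ forces the prefactor $c_{k+1}^{-1}$ on the odd entries. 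If your intent was to prove Theorem \ref{SO-O relation}, your outline is essentially the paper's own proof of that result; but as a proof of Proposition \ref{prop4.2} it addresses the wrong statement and is circular.
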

This Proposition appears as Theorem 3.2 in \cite{adler:02} for the case $c_k
=1$.  In this more general case the additional factor of a $c_{k+1}^{-1}$ is
present to ensure that the recursion relation $L$ has $L_{2k+1, 2k} =
c_{k+1}$.  This entry of the Pfaff variable is a Casimir of the Pfaff lattice
equations and so is fixed.  In the definition of the skew-orthogonal
functions
 it corresponds to a choice of the ratio between leading coefficients
 of the polynomial parts of $\psi_{2k}$ and $\psi_{2k+1}$.

Now we prove Theorem \ref{SO-O relation}:
\begin{Proof}
 First we show that $\psi_k(z,\mathbf{t})$'s in the Theorem are the same as those in Proposition
 \ref{prop4.2}, when the matrix $L$ is in an $H^S$-tridiagonal form (\ref{BG.L0form}). To show this, we recall
 Theorem \ref{PTtau}, i.e.
 \[
 \tau_{2k}(\mathbf{t})=\frac{1}{c_1\cdots c_k}\,\tau^{\rm Toda}_k(\mathbf{t})\,.
 \]
Since the odd flows are trivial, we have $\mathbf{t}=(t_2,t_4,\ldots, t_{2n})$. Then substituting
this relation into $\psi_k(z,\mathbf{t})$ in Proposition \ref{prop4.2}, it is straightforward to show the equations
in the Theorem.

Now we show the skew-orthogonality relation (\ref{SOrelation}): We only check the case
$\langle \psi_{2j}, \psi_{2k+1}\rangle=\delta_{jk}$, and the others are trivial with the form of
the inner product.
\begin{eqnarray*}
\langle\psi_{2j},\psi_{2k+1}\rangle &= \sum_{i=1}^{n}\left[\psi_{2j}(-z_i)\psi_{2k+1}(z_i)
-\psi_{2j}(z_i)\psi_{2k+1}(-z_i)\right]\frac{c_i}{2z_i} \\
    & = c_{k+1}^{-1} \sum_{i=1}^n \phi_{j+1}(\lambda_i)\phi_{k+1}(\lambda_i) c_i = \delta_{jk}\,,
\end{eqnarray*}
where we have used the orthogonal relation $\Phi C\Phi^T=C$ in Lemma \ref{Torthogonal}.
\end{Proof}

We also note that the moment matrix $M(\mathbf{t})$ for the Pfaff lattice has the similar form
as the $M^{\rm Toda}(\mathbf{t})$ for the indefinite Toda lattice given in (\ref{Mtoda}),
\begin{eqnarray*}
M(\mathbf{t}) &=e^{\xi(L^2_0,\mathbf{t})}\,J \, e^{\xi(L^2_0,\mathbf{t})^T}
      = \Psi_0\,e^{\xi(\Lambda^2,\mathbf{t})}\,\Psi_0^{-1}\,J\,\Psi_0^{-T}\,e^{\xi(\Lambda^2,\mathbf{t})}\,\Psi_0^T\\
      &= \Psi_0\,e^{2\xi(\Lambda^2,\mathbf{t})}\,\mathcal{K}^{-1}\,J\,\Psi_0^T
      = \left(\langle \psi_i^0,\psi_j^0\,e^{2\xi(z^2,\mathbf{t})}\rangle\right)_{0\le i,j\le 2n-1}\,.
\end{eqnarray*}
Comparing with (\ref{mij}), the entries $m_{ij}:=\langle\psi^0_i,\psi^0_je^{2\xi(z^2,\mathbf{t})}\rangle$ are given by $m_{ij}=-m_{ji}$ with
\[
m_{2j-1,2k}=-m_{2k,2j-1}\qquad {\rm for}\quad j\le k\,,
\]
and zero for all other cases, i.e. $m_{2j-1,2k-1}=m_{2j,2k}=0$. With this structure of the moment matrix, one can
give a direct proof of Theorem \ref{SO-O relation} without Proposition \ref{prop4.2}, however
this might be a less elegant approach.

\begin{Remark}
The skew-inner product of (\ref{skew-inner}) is closely related to the
skew-inner product for GSE-Pfaff lattice, which is defined as
\[
\langle f,g\rangle_{\rm GSE}:=\sum_{k=1}^n\{f,g\}(z_{2k})c_k\,,
\]
where $\{f,g\}(z)=f'(z)g(z)-g'(z)f(z)$ with $f'(z)=df(z)/dz$.
This inner product may be obtained by the following
inner product in the limit $z_{2k-1}\to z_{2k}$ (see \cite{kodama:07}), i.e.
\[
\langle f,g\rangle_{\rm GSE}=\lim \sum_{k=1}^n
\frac{f(z_{2k-1})g(z_{2k})-f(z_{2k})g(z_{2k-1})}{z_{2k}-z_{2k-1}} c_k\,,
\]
In the case of (\ref{skew-inner}), we instead take the limit $
z_{2k-1}\to -z_{2k}$.

Also in the case of the continuous measure, the inner product
(\ref{skew-inner}) can be written by
\[
\langle f, g \rangle= \int_{\Sigma} \left[ f(-z)
g(z) - f(z) g(-z) \right]  \,c(z^2)\,dz\,,
\]
where  $\Sigma = \mathbb{R}_+ \cup i \mathbb{R}_+$ oriented
from $i\infty$ to $0$ and then to $\infty$.
To reduce this expression back to the discrete case we take
\begin{eqnarray*}
c(z^2)\,  dz \big|_{\Sigma} &= \sum_{k=1}^{n} \delta(z^2 - z_k^2) \, dz
\Big|_{\Sigma}  = \sum_{k=1}^n \delta( (z-z_k) (z+z_k) ) dz \Big|_{\Sigma}\\
&= \sum_{k=1}^n \left[ \frac{1}{2 z_k} \delta( z- z_k) + \frac{1}{2
z_k} \delta(z+z_k) \right] dz \Big|_{\Sigma} = \sum_{k=1}^n
\frac{1}{2 z_k} \delta(z-z_k)\, dz \Big|_{\Sigma}\,, \end{eqnarray*}
with $z_k \in \Sigma$, and where we used the property $\delta( a z)
= \frac{1}{|a|} \delta(z)$ for $a\in\mathbb{R}$ and $\delta( a z) =
\frac{1}{i |a|} \delta(z) $ for $a\in i \mathbb{R}$. The last
equality is because we have restricted to the positive and positive
imaginary square roots with $z_k \in \Sigma$. The result is that
this discrete inner product will agree with (\ref{skew-inner}).

\end{Remark}

\subsection{Asymptotics of the even Pfaff flows}

In this final section, we mention the asymptotic behavior of the even Pfaff lattice
 based on the results of the indefinite Toda lattice discussed in \cite{KoYe96, KoYe98}:
Theorem \ref{TP} implies that the Pfaff flow on $H^S$-tridiagonal matrix
 $L_0$ with $c_k = \pm 1$ is equivalent to the indefinite
Toda flow on $\tilde{T} = C T$ given by (\ref{T}).  This system was
studied in detail in \cite{KoYe96} and \cite{KoYe98}.  It is a
version of the Toda lattice which uses HR-factorization in place
of QR-factorization, see \cite{Fa93}.  The goal of HR-factorization is to
 write an element $g \in SL(n, \mathbb{R})$ as $g
= r h $ where $r$ is lower triangular and $h$ satisfies
\[ h C h^T = C\,. \]
For $g(t_{2j}) := \exp({ t_{2j} \tilde{T}(0)^j })$, the HR-factorization $g(t_{2j}) = r(t_{2j}) h(t_{2j}) $
gives the solution of the $t_{2j}$-flow of the Pfaff lattice,
 \[\tilde{T}(t_{2j}) = r^{-1}(t_{2j}) \tilde{T}(0)
r(t_{2j}) = h(t_{2j}) \tilde{T}(0) h^{-1}(t_{2j})\,. \]
Thus the indefinite Toda lattice is a continuous version of the
HR-algorithm \cite{Fa93}.

The indefinite Toda flow may experience a blow up, where some of the
entries reach infinity in finite time \cite{KoYe98}.  A blow up occurs when one of
the $\tau$ functions becomes $0$.  In our case this is precisely
when one of the $\tau_{2k} = 0$. The initial conditions for a blow-up are characterized by
\begin{Theorem}[Theorem 3 in \cite{KoYe96}]
If $\tilde{T}(0)$ possess non-real eigenvalues or non-real eigenvectors while $\tau_{2k}(0)
\neq 0$, then $\tilde{T}(t)$ blows up to infinity in finite time.
\end{Theorem}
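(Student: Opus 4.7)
The plan is to reduce the claim to the vanishing of some $\tau^{\rm Toda}_k(t)$ at a finite real $t$, from which the blow-up of $\tilde{T}(t)$ follows directly through the formula (\ref{Todatau}) relating $a_k$ to $[\tau^{\rm Toda}_{k+1}\tau^{\rm Toda}_{k-1}]^{1/2}/\tau^{\rm Toda}_k$.

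First I would use the spectral decomposition $\tilde{T}_0 = \Phi_0 D \Phi_0^{-1}$ together with the orthogonality Lemma \ref{Torthogonal}, $\Phi_0 C \Phi_0^T = C$, to write
\[ M^{\rm Toda}(t) = e^{t\tilde{T}_0}\, C\, e^{t\tilde{T}_0^T} = \Phi_0\, e^{2tD}\, C\, \Phi_0^T \, , \]
and then apply the Cauchy--Binet formula to the $k\times k$ upper-left minor to obtain
\[ \tau^{\rm Toda}_k(t) = \sum_{|I|=k} A_I\, e^{2t\mu_I}\, ,\qquad \mu_I = \sum_{l\in I}\lambda_l\, ,\qquad A_I = \Bigl(\prod_{l\in I}c_l\Bigr)\,\bigl[\det \Phi^{[1..k]}_{0,I}\bigr]^{2}, \]
where the sum runs over $k$-subsets $I$ of $\{1,\ldots,n\}$ and $\Phi^{[1..k]}_{0,I}$ is the submatrix of the first $k$ rows and columns indexed by $I$.

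Next I would analyze the coefficients $A_I$ in the two hypothesized cases. When $\tilde{T}_0$ has a non-real eigenvalue $\lambda$, then $\bar\lambda$ is also an eigenvalue with conjugate eigenvector, so index sets $I$ containing $\lambda$ but not $\bar\lambda$ produce exponents $\mu_I$ with nonzero imaginary parts; pairing such $I$ with its conjugate $\bar I$ and using the reality of $\tau^{\rm Toda}_k$ gives genuine oscillatory contributions of the form $2|A_I|\, e^{2\mathrm{Re}(\mu_I)t}\cos\!\bigl(2\mathrm{Im}(\mu_I)t + \theta_I\bigr)$. When instead all eigenvalues are real but some eigenvector cannot be normalized to be real, which, by the constraint $v_l^T C v_l = c_l$, happens exactly when the sign of $u_l^T C u_l$ (for the real eigenvector $u_l$) is opposite to $c_l$, forcing $v_l = i u_l$, then $[\det \Phi^{[1..k]}_{0,I}]^2$ acquires the sign $(-1)^{|I\cap S|}$ where $S$ indexes the imaginarily-normalized columns, so $A_I$ is negative for suitable $I$. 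The hypothesis $\tau^{\rm Toda}_{2k}(0)\neq 0$ is precisely what prevents these contributions from being identically cancelled.

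Finally I would extract a real zero. Since $\tau^{\rm Toda}_k(t)$ is a real-analytic exponential polynomial that is not identically zero, in the complex-eigenvalue case the oscillatory piece forces sign changes at finite time (the leading non-trivial cosine dominates on a suitable subsequence), whereas in the indefinite-signature case the largest exponent $\mu_I$ whose coefficient is negative drives $\tau^{\rm Toda}_k(t) \to -\infty$ as $t \to +\infty$; combined with $\tau^{\rm Toda}_k(0)\neq 0$, the intermediate value theorem produces a least positive zero $t_*$. Formula (\ref{Todatau}) then yields $|a_{k-1}(t)|, |a_k(t)| \to \infty$ as $t \uparrow t_*$, establishing the blow-up. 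The main obstacle will be the signature argument in the previous step: showing that the leading $A_{I_{\rm max}}$ does not accidentally vanish and has a definite sign or phase. I expect this to follow from the orthogonality constraint $\Phi_0 C \Phi_0^T = C$, which pins down the parity of $|I_{\rm max}\cap S|$ once $\tau^{\rm Toda}_{2k}(0)\neq 0$ is imposed, and from the rigidity of the $C$-signature on each spectral subspace.
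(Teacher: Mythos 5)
Note first that the paper does not prove this statement: it is imported verbatim as Theorem~3 of \cite{KoYe96}, so there is no in-paper argument to compare against, and your proposal must be judged as a reconstruction of the Kodama--Ye proof. Your skeleton matches theirs: from $M^{\rm Toda}(t)=\Phi_0e^{2tD}C\Phi_0^T$ and Cauchy--Binet one gets $\tau_k^{\rm Toda}(t)=\sum_{|I|=k}A_Ie^{2t\mu_I}$ with $A_I=\bigl(\prod_{l\in I}c_l\bigr)\bigl[\det\Phi^{[1..k]}_{0,I}\bigr]^2$, and a finite-time zero of some $\tau_k^{\rm Toda}$ forces blow-up through (\ref{Todatau}) (together with the logarithmic-derivative formula for the diagonal entries). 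That reduction is correct, as is your identification of the two mechanisms that can make some $A_I$ non-positive or non-real: unpaired complex eigenvalues in $I$, and real eigenvectors whose $C$-norm has the wrong sign, forcing a purely imaginary normalization and a factor $(-1)^{|I\cap S|}$.

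The gap is exactly the step you flag as ``the main obstacle,'' and it is the actual content of the theorem, not a technicality. (a) In the complex-eigenvalue case, an oscillatory contribution $2|A_I|e^{2\mathrm{Re}(\mu_I)t}\cos(2\mathrm{Im}(\mu_I)t+\theta_I)$ forces sign changes only if $\mathrm{Re}(\mu_I)$ is the extremal real part among all index sets with nonvanishing coefficient (as $t\to+\infty$, or the minimal one as $t\to-\infty$); otherwise a larger real-exponent term with positive coefficient masks the oscillation entirely. You must choose $k$ so that ordering the eigenvalues by real part cuts between a conjugate pair, making the maximizing $I$ non-self-conjugate, and you must show $A_I\neq0$ for that $I$. (b) In the real-spectrum, indefinite-signature case, the claim that ``the largest exponent whose coefficient is negative drives $\tau_k^{\rm Toda}\to-\infty$'' is false as stated: it only does so if that exponent is the globally dominant one. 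What is needed is a signature argument showing that for \emph{some} $k$ the coefficient attached to the dominant exponent at $t\to+\infty$ or at $t\to-\infty$ has a sign incompatible with $\tau_k^{\rm Toda}(0)=c_1\cdots c_k$, so that the intermediate value theorem applies. You defer both points to ``the rigidity of the $C$-signature,'' but without proving them the argument does not close; this is precisely the part of Theorem~3 in \cite{KoYe96} that requires work.
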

The second possibility occurs if the $c_k$ do not all have the
same sign. The case when the eigenvalues of $\tilde{T}$ are real and
the $c_k$ do not all have the same sign is studied in Section 4 of
\cite{KoYe98}. It is shown that every flow (considered in both
directions $t_{2j}\to \pm \infty$) contains a blow up.  We may then
compactify the flows by adding infinite points representing the blow
ups.  It is then shown that the fixed points of the compactified
flows are diagonal matrices.   In other words $a_k \to 0$ and $ d_k \to
c_k z^2_{\sigma(k)}$ as $t_{2j}\to\infty$ for some permutation $\sigma\in\mathcal{S}_n$,
the symmetric group of order $n$.
 When the eigenvalues of $\tilde{T}$ are real we may count the
total number of blow ups on the flow.  If the number of changes of
sign in the sequence $c_k$ is $m$ then the number of blow ups is $m
(n-m)$ \cite{KoYe98}.

There is a condition under which the indefinite Toda flow, with $c_k
= - 1$ for some $k$,  does not contain a blow up in the positive $t$
direction.  A lower triangular matrix is called lower triangular
totally positive if all its non-trivial minors are positive.
 It was shown in \cite{gekhtman:97} that if the lower component of
the LU-factorization of the eigenvector matrix of $\tilde{T}_0$ is
lower triangular totally positive then there are no blow ups on the
indefinite Toda flow for $t > 0$.

If $L_0$ has only real and imaginary eigenvalues, then $\tilde{T}$ has
real eigenvalues.  By Theorem
\ref{TP} we find that the flow may be continued through a blow up, if
it occurs, and that this compactified flow converges:
\[ L(t_{2j}) \to \mbox{diag}_2\left( \left(\begin{array}{cc} 0 & c_1
  \\ c_1 z_{\sigma(1)}^2 & 0 \end{array}\right), \dots,
\left(\begin{array}{cc} 0 & c_n \\ c_n z_{\sigma(n)}^2 & 0 \end{array}\right)
\right) \,,
\]
where the eigenvalues of $L_0$ are $\{\pm z_m:m=1,\ldots,n\}$.

If the $c_k=1$ for all $k$, then $\tilde{T}$ is symmetric and
no blow up is possible.  We will restrict to this case from now on.
We appeal to the abundant literature on the Toda lattice for the
results needed \cite{deift:83, KoYe96, KoYe98}. In
particular, under mild assumptions, as $t_{2j} \to \pm\infty$,
$T(t_{2j})$ given by (\ref{Ttoda}) converges to a diagonal matrix.
As a result we see that $a_k(t_{2j}) \to 0$.

In terms of the Pfaff lattice this is showing that  $L(t_2)$
converges to a $2\times 2$ block diagonal matrix as
$t_2\to \infty$ where the
$j$-th diagonal block has the form
\[ \left(\begin{array}{cc} 0 & 1 \\ z_j^2 & 0 \end{array}\right) \,. \]
The blocks will be sorted by the size of $z^2$ so that the blocks
with pairs of imaginary eigenvalues will appear in the lower right
corner while those with pairs of real eigenvalues will appear in the
upper left corner.
For the $t_4$-flow, the eigenvalues will again be such a pair but will
now be sorted by the size of $z^4$, mixing the blocks with real
and imaginary pairs.

While the end result is that the Pfaff lattice restricted to symplectic lower Hessenberg matrices is equivalent in every sense to the indefinite Toda lattice flows, this is not a result which is immediately obvious from the Pfaff lattice itself.  One sees that even in the case when the $c_k=1$ for all $k$, the long time dynamics of the Pfaff lattice flows of symplectic lower Hessenberg matrices is significantly different from those of non symplectic lower Hessenberg matrices; for example the the symplectic matrices are in fact fixed points of the flow.  This difference alone merited a separate work on these cases and produced some interesting and new connections to generalizations of the Toda lattice.

\appendix
\addcontentsline{toc}{section}{Appendix: SR-algorithm}
\section*{Appendix: SR-algorithm}
\setcounter{section}{1}

The Toda lattice may be viewed as a continuous version of the QR-algorithm for diagonalizing a symmetric tridiagonal matrix.  In the same way the Pfaff lattice is a continuous version of the SR-algorithm on an  $H^S$-tridiagonal matrix.  Here we collect some pertinent details about the SR-algorithm.  These results have been covered from a variety of points of view, a partial list of references is \cite{benner:97, bunse:86B, Fa00, rutishauser:58, symes:82, watkins:84, watkins:88, watkins:90}.

Here we consider the SR-algorithm for a symplectic matrix which is
symplectically similar to a lower Hessenberg matrix. Recall that any matrix can be
reduced to a similar Hessenberg matrix by Householder's method, a fact
 used in giving an efficient version of the
QR-algorithm. However, note that the Householder's method does not, in
general, give a symplectic
conjugation.

We will need Theorem 3.4 from \cite{bunse:86B}:
\begin{Theorem}\label{BG.hessenbergL0}
Let $h\in Sp(n,\mathbb{R})$ and $s_1$ be a row vector in $\mathbb{R}^{2n}$.  Then there exists a symplectic transformation $S$ such that
$S h S^{-1}$  is a lower Hessenberg symplectic matrix
iff $K(h, s_1)$ has an SR-factorization $K(h, s_1) = RS$
 where $R$ is a lower triangular matrix and
$s_1$ is the first row of $S\in Sp(n,\mathbb{R})$.  In addition, if this factorization
exists, then
\[ L_0 := S h S^{-1} = R^{-1} C_h R \]
is in the lower Hessenberg form,
\begin{equation}
L_0=
\left(\begin{array}{ccccccc}
\begin{array}{cc} 0 & c_1 \\ d_1 & 0 \end{array} &\vline&
\begin{array}{cc} 0  & 0 \\ a_1 & 0 \end{array} &\vline& \cdots &\vline& 0_2
\\\hline\bigxstrut
\begin{array}{cc} 0 & 0 \\ a_1 & 0 \end{array} &\vline&
\begin{array}{cc} 0 & c_2 \\ d_2 & 0 \end{array} &\vline&\cdots &\vline& 0_2
\\\hline\bigxstrut
\vdots &\vline& \vdots &\vline& \ddots &\vline& \vdots
\\\hline\bigxstrut
0_2 & \vline& 0_2 &\vline& \cdots &\vline &\begin{array}{cc} 0 & c_n \\ d_n & 0\end{array}
\end{array}\right) \,,
\end{equation}
that is, $L_0$ has a $2\times 2$ block tridiagonal form
$ L_0 = (l_{i,j})_{1\le i,j\le n} $ with
 $2\times 2$ block matrices $l_{i,j}$ having $l_{i,j}=0_2$ for $|i-j|>1$ and
\[
l_{k,k}=\left(\begin{array}{cc} 0&c_k \\d_k & 0\end{array}\right),\quad
l_{k,k+1}=l_{k+1,k}=\left(\begin{array}{cc} 0&0\\a_k&0\end{array}\right)\,,
\]
where $c_k = \pm 1$.
\end{Theorem}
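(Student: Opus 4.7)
The plan is to use a Krylov-matrix shift identity to transfer the lower-Hessenberg structure between $h$ and its symplectic conjugate $L_0$. Let $K(h, s_1)$ denote the $2n \times 2n$ matrix whose $k$-th row is $s_1 h^{k-1}$. By Cayley--Hamilton there is a fixed matrix $C_h$ (a companion form, in lower-Hessenberg shape after the standard row ordering) satisfying the shift identity
\[ K(h, s_1)\,h \;=\; C_h\, K(h, s_1). \]
This identity is the algebraic backbone of both implications.

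For the ``if'' direction, assume $K(h, s_1) = RS$ with $R$ lower triangular and $S \in Sp(n, \mathbb{R})$ whose first row is $s_1$. Substituting into the shift identity gives $S h S^{-1} = R^{-1} C_h R =: L_0$. Since $h$ and $S$ are symplectic, so is $L_0$. To see that $L_0$ is lower Hessenberg, use the descending flag $V^k := \mathrm{span}(e_k, \ldots, e_{2n})$: a lower triangular matrix and its inverse preserve every $V^k$, while lower-Hessenberg $C_h$ sends $V^k$ into $V^{k-1}$; hence $R^{-1} C_h R$ does too, which is the definition of lower Hessenberg.

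For the ``only if'' direction, assume $L_0 = S h S^{-1}$ is lower-Hessenberg symplectic with $s_1 = e_1^T S$. A short induction using $s_1 h^k = e_1^T L_0^k S$ yields $K(h, s_1) = K(L_0, e_1^T)\, S$. A straightforward induction on $k$ shows that $e_1^T L_0^{k-1}$ lies in $\mathrm{span}(e_1, \ldots, e_k)$ when $L_0$ is lower Hessenberg, so $R := K(L_0, e_1^T)$ is lower triangular, giving the required SR-factorization.

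The final and most delicate step is to extract the explicit $2\times 2$ block-tridiagonal shape of $L_0$ with diagonal blocks $l_{k,k} = \begin{pmatrix} 0 & c_k \\ d_k & 0 \end{pmatrix}$, off-diagonal blocks $l_{k,k\pm 1} = \begin{pmatrix} 0 & 0 \\ a_k & 0 \end{pmatrix}$, and $c_k = \pm 1$. The key is that the symplectic identity $L_0^T J L_0 = J$, combined with the lower-Hessenberg sparsity already established, cascades column-by-column to annihilate all entries except those indicated; within each $2 \times 2$ diagonal block the symplectic determinant constraint then leaves a single degree of freedom that is normalized to $c_k = \pm 1$ by absorbing a signed diagonal rescaling into $R$ (reflecting the residual freedom in the SR-factorization). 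The main obstacle is precisely this bookkeeping: one must push the zero constraints through all the way, matching them to the free parameters allowed by symplecticity, as carried out in detail in Theorem 3.4 of \cite{bunse:86B}.
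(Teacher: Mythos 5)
A preliminary remark: the paper offers no proof of this statement at all --- it is imported verbatim as Theorem 3.4 of \cite{bunse:86B} --- so there is no internal argument to compare yours against, and your proposal has to be judged on its own terms. The equivalence part of your proof is correct and is the standard Krylov-matrix argument: the shift identity $K(h,s_1)\,h = C_h\,K(h,s_1)$, the flag argument showing that $R^{-1}C_hR$ remains lower Hessenberg when $R$ is lower triangular, and the converse factorization $K(h,s_1)=K(L_0,e_1^T)\,S$ with $K(L_0,e_1^T)$ lower triangular are all sound, modulo the standing assumption (implicit in ``has an SR-factorization'') that $K(h,s_1)$ is nonsingular, i.e.\ that the Hessenberg form is unreduced.

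The gap is in the final step, which is where the actual content of the block form lies. First, you defer the decisive bookkeeping to ``Theorem 3.4 of \cite{bunse:86B}'' --- but that is precisely the statement being proven, so as written the argument is circular. Second, the derivation you sketch cannot work as stated: sparsity plus symplecticity alone do not yield the form (\ref{BG.L0form}). The relevant relation here is the Lie-algebra one, $JL_0+L_0^TJ=0$ (the target form (\ref{BG.L0form}) lies in $\mathfrak{sp}(n)$, not in the group; per the Remark in the introduction the paper uses ``symplectic'' for what the numerical literature calls Hamiltonian matrices), and with it a lower Hessenberg matrix is only forced into $2\times2$ block tridiagonal form with \emph{trace-free} diagonal blocks $\left(\begin{array}{cc}\alpha_k&c_k\\ d_k&-\alpha_k\end{array}\right)$: for instance ${\rm diag}(1,-1,1,-1,\dots)$ is lower Hessenberg and lies in $\mathfrak{sp}(n)$ but is not of the form (\ref{BG.L0form}). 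If instead you insist on the group relation $L_0^TJL_0=J$ that you actually invoke, the diagonal blocks acquire quadratic constraints (already for $n=1$ one gets $c_kd_k=-1$) incompatible with $d_k$ free and $c_k=\pm1$. So the vanishing of the $\alpha_k$ and the normalization $c_k=\pm1$ require input you have not used --- the unreducedness of the Hessenberg form, the specific structure of the conjugate $R^{-1}C_hR$ of the companion matrix, and the residual rescaling freedom in $\tilde{G}\cap Sp(n,\mathbb{R})$ --- and supplying that cascade explicitly is exactly the missing piece.
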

A dense set of symplectic matrices $h$ may be placed in this form.
We call a matrix in the form (\ref{BG.L0form}) an
``$H^S$-tridiagonal'' matrix, which plays the similar role as a tridiagonal matrix
in the case of the symmetric matrices.

Furthermore, at the expense of excluding a large set of symplectic
matrices, we may refine Theorem \ref{BG.hessenbergL0}:
\begin{Theorem}\label{hessenbergL0}
Let $h\in Sp(n,\mathbb{R})$, and $s_1$ be a row vector in $\mathbb{R}^{2n}$. Then
there exists a symplectic transformation $S$ such that $S h
S^{-1}$ is a lower Hessenberg matrix iff $K(h, s_1)$ has an
SR-factorization $K(h, s_1)=RS$ with $R\in G_{\mathfrak{k}}$, and
 with $s_1$ equal to the first row of $S$.
In addition, if this factorization exists, then
\[ L_0 := S h S^{-1} = R^{-1} C_h R \]
is in an $H^S$-tridiagonal form  (\ref{BG.L0form}) with $c_k =1$ for
all $k$.
\end{Theorem}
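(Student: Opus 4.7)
My plan is to derive Theorem \ref{hessenbergL0} as a refinement of Theorem \ref{BG.hessenbergL0} by restricting the block diagonal entries of $R$ to scalar matrices $\alpha_k I_2$ and then tracking how this extra constraint propagates to the diagonal $2\times 2$ blocks of $L_0 = R^{-1} C_h R$. The key observation is that $G_{\mathfrak{k}} \subset \tilde G$, so the conclusion of Theorem \ref{BG.hessenbergL0} still applies and places $L_0$ in $H^S$-tridiagonal form with $c_k = \pm 1$; the additional restriction $R \in G_{\mathfrak{k}}$ must then force $c_k = +1$ for every $k$. Characterization-wise, this ought to line up with the positivity condition $\mathrm{pf}(M_{2k}) > 0$ of Theorem \ref{SRtheorem}, which is what distinguishes the $G_{\mathfrak{k}}$-factorization from the merely lower-triangular one.

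For the forward direction, suppose $K(h, s_1) = R S$ with $R \in G_{\mathfrak{k}}$ and $S \in Sp(n, \mathbb{R})$. Since $R$ is in particular block lower triangular, Theorem \ref{BG.hessenbergL0} already delivers $L_0 = S h S^{-1} = R^{-1} C_h R$ in $H^S$-tridiagonal form. To pin down $c_k = +1$, I would compute the $k$-th diagonal block of $L_0$ directly: writing $R$ and $R^{-1}$ in block form, their diagonal blocks are the scalar matrices $\alpha_k I_2$ and $\alpha_k^{-1} I_2$, which commute with every $2\times 2$ matrix, so the $k$-th diagonal block of $R^{-1} C_h R$ coincides with the $k$-th diagonal block of $C_h$. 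The construction of $C_h$ from the Krylov-type matrix $K(h, s_1)$ via the intertwining $h K(h, s_1) = K(h, s_1) C_h$ normalizes the superdiagonal entry of each $2\times 2$ diagonal block of $C_h$ to $+1$, whence $c_k = 1$ for all $k$.

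For the reverse direction, suppose a symplectic $S$ with first row $s_1$ exists such that $L_0 = S h S^{-1}$ is $H^S$-tridiagonal with all $c_k = 1$. Define $R := K(h, s_1) S^{-1}$; then $K(h, s_1) = R S$ tautologically, and Theorem \ref{BG.hessenbergL0} already implies that $R$ is lower block triangular. It remains to show that its diagonal blocks are scalar, i.e. $R \in G_{\mathfrak{k}}$. For this I would exploit the intertwining $R L_0 = C_h R$ combined with the specific $H^S$-tridiagonal form with $c_k = +1$ (rather than the generic $c_k = \pm 1$), and argue block by block down the diagonal that each diagonal block $R_{k,k}$ must be a scalar multiple of $I_2$: the equality $c_k = +1$ is exactly what rules out the off-scalar freedom present in Theorem \ref{BG.hessenbergL0}. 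The main obstacle I expect is this block-by-block bookkeeping and the handling of the non-uniqueness of the SR-factorization modulo $G_{\mathfrak{k}} \cap Sp(n, \mathbb{R}) = \{\mathrm{diag}_2(\pm I_2, \ldots, \pm I_2)\}$; this is resolved by normalizing $R$ to lie in the connected component of the identity in $G_{\mathfrak{k}}$ (equivalently, by requiring each $\alpha_k > 0$), which matches the positivity $\mathrm{pf}(M_{2k}) > 0$ characterizing the existence of the $G_{\mathfrak{k}}$-factorization in Theorem \ref{SRtheorem}.
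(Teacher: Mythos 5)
First, a point of comparison: the paper offers \emph{no} proof of Theorem \ref{hessenbergL0} --- it is stated as a refinement of Theorem \ref{BG.hessenbergL0}, which is itself quoted from \cite{bunse:86B}, and the text moves straight on to the Proposition about eigenvalues. So your attempt cannot be measured against an argument in the paper; it has to stand alone, and as written it has two genuine gaps. In the forward direction, the step ``the diagonal blocks of $R$ and $R^{-1}$ are scalar, hence the $k$-th diagonal block of $R^{-1}C_hR$ coincides with that of $C_h$'' is not valid: that reasoning only works when the middle factor is block triangular, whereas $C_h$ is a companion-type (lower Hessenberg) matrix with a nonzero block superdiagonal, so the $(k,k)$ block of $R^{-1}C_hR$ also receives the cross terms $(R^{-1})_{k,k-1}(C_h)_{k-1,k}R_{k,k}$ and $(R^{-1})_{k,k}(C_h)_{k,k+1}R_{k+1,k}$. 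The conclusion you actually need --- that the $(1,2)$ entry of that block, namely $c_k$, equals the corresponding entry $1$ of the companion matrix --- does survive, because the superdiagonal blocks of a scalar lower Hessenberg matrix have only their lower-left entry nonzero and the cross terms therefore vanish in the $(1,2)$ position; but that computation must actually be carried out, and your stated justification does not do it.

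The reverse direction is where the real content lies, and it is essentially missing. You assume that $L_0=ShS^{-1}$ is $H^S$-tridiagonal with all $c_k=1$, which is stronger than the theorem's hypothesis (only ``lower Hessenberg''); and the pivotal claim --- that $c_k=+1$ ``rules out the off-scalar freedom'' in the diagonal blocks $R_{k,k}$ --- is asserted rather than argued, and is implausible as stated: $c_k=1$ is one scalar equation per block, while forcing $R_{k,k}=\alpha_k I_2$ kills three of the four parameters of an invertible $2\times 2$ block (up to overall scale). The genuine obstruction to upgrading an $\tilde{G}$-factorization to a $G_{\mathfrak{k}}$-factorization is $\det(R_{k,k})>0$ for every $k$, equivalently ${\rm pf}(M_{2k})>0$ --- exactly the gap between Theorems \ref{tautheorem} and \ref{SRtheorem}. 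You name this condition at the end but never derive it from your hypothesis, nor do you show that once it holds the residual block-diagonal factor (which lies in $\tilde{G}\cap Sp(n,\mathbb{R})$) can be absorbed into the symplectic factor $S$; that absorption is the actual mechanism, and it changes $S$ and hence its first row $s_1$, which the statement also requires you to track. Finally, the paper never defines $K(h,s_1)$ or $C_h$; your reading of them as the Krylov and companion matrices is the natural one, but a self-contained proof would need to fix these definitions and reprove (or carefully cite) the relevant parts of \cite{bunse:86B} rather than invoke Theorem \ref{BG.hessenbergL0} as a black box.
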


We now show that if $h$ has quadruples of complex eigenvalues
then it will not be similar to a matrix in the $H^S$-tridiagonal form of
(\ref{BG.L0form}) with $c_k = 1$ for all $k$:
\begin{Proposition}
Let $L$ be a $2n\times 2n$ matrix in the form of (\ref{BG.L0form})
with $c_k =1$ for all $k$.  The
eigenvalues of $L$ come in real or imaginary pairs without multiplicity.
\end{Proposition}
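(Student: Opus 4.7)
The plan is to derive the result by squaring $L$ and invoking the identity (\ref{L2Tij}). First I would observe, exactly as noted following Theorem \ref{TP}, that if $P$ denotes the permutation sending odd-indexed basis vectors to positions $1,\ldots,n$ and even-indexed ones to positions $n+1,\ldots,2n$, then (\ref{L2Tij}) rearranges into
\[
P^T L^2 P = \left(\begin{array}{cc} \tilde{T} & 0_n \\ 0_n & \tilde{T}^T \end{array}\right)\,.
\]
Hence the spectrum of $L^2$ coincides (as a set) with that of $\tilde{T}$, with each eigenvalue of $\tilde{T}$ appearing with algebraic multiplicity exactly doubled in $L^2$.

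When $c_k = 1$ for all $k$ we have $C = I_n$, so $\tilde{T} = T$ is the symmetric tridiagonal matrix of (\ref{Tsymm}) with off-diagonal entries $a_k$. Assuming the $a_k$ are all nonzero (otherwise $L$ block-decouples into strictly smaller $H^S$-tridiagonal pieces and one applies the same argument to each), $T$ is a Jacobi matrix, and by the classical theory of Jacobi matrices it has $n$ distinct simple real eigenvalues $\mu_1,\ldots,\mu_n$.

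Since $L \in \mathfrak{sp}(n)$, its eigenvalues come in $\pm$ pairs, and each pair $\{\lambda,-\lambda\}$ contributes the single value $\lambda^2$ with multiplicity two to the spectrum of $L^2$. Matching this with the computation above forces the eigenvalues of $L$ to be precisely $\pm\sqrt{\mu_k}$ for $k = 1,\ldots,n$: a real $\mu_k > 0$ yields a real pair $(\sqrt{\mu_k},-\sqrt{\mu_k})$, a real $\mu_k < 0$ yields a purely imaginary pair, and distinct values of $\mu_k$ produce distinct pairs. In particular no quadruple of genuinely complex eigenvalues can arise and no pair is repeated. The only delicate point is the possible eigenvalue $\mu_k = 0$, which would collapse a pair into a doubled zero; this case is naturally excluded by the ``without multiplicity'' qualifier in the statement, so once the block decomposition of $L^2$ is in hand the rest is essentially bookkeeping.
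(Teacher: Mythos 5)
Your proof is correct and follows essentially the same route as the paper's: square $L$ to reduce to the symmetric tridiagonal matrix $T$ of (\ref{Tsymm}) (the paper writes $L^2 = T\otimes I_2$ and $\det(L^2-\lambda I_{2n}) = [\det(T-\lambda I_n)]^2$ rather than permuting $L^2$ into block-diagonal form), and then uses the reality of the spectrum of $T$. You are in fact somewhat more explicit than the paper about the simplicity of the eigenvalues (the Jacobi-matrix hypothesis $a_k\neq 0$) and the degenerate case $\mu_k=0$, both of which the paper's own proof leaves implicit.
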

\begin{Proof}
One checks that for $L$ in this form, we have
\[ L^2 = T \otimes I_2 \]
where $T$ is a symmetric tridiagonal $n\times n$ matrix given by
\begin{equation}\label{Tsymm}
T=\left(\begin{array}{ccccc}
d_1 & a_1 & 0 & \cdots & 0 \\
a_1 & d_2 & a_2 &\cdots & 0 \\
\vdots&\vdots&\ddots &\ddots& \vdots\\
0    &   0  &  \cdots &  d_{n-1} & a_{n-1} \\
0   &    0  &  \cdots  &  a_{n-1} & d_n
\end{array}\right)
\end{equation}
  From this one computes that
\[ F(\lambda) = \mbox{det}\left( L^2 - \lambda I_{2n} \right) = \left[ \mbox{det}\left( T -
\lambda I_n \right) \right]^2 \,.\]
As $T$ is symmetric, $F(\lambda)$ only has $n$ real roots each with
multiplicity $2$, therefore the
eigenvalues of $L$ are square roots of real numbers, and so are only
real or imaginary.
\end{Proof}
There is an algorithm for carrying out the transformations of Theorems \ref{BG.hessenbergL0} and \ref{hessenbergL0}, see \cite{Fa00}.


This procedure is  the analogous operation to the QR-algorithm step of
transforming a symmetric matrix to a tridiagonal form  by the
Householder's method. There is a large time savings in carrying out
this transformation first as an SR-factorization of a matrix in the
form of  $L_0$ only requires $\mathcal{O}(n)$ operations.

The SR-algorithm is defined as an iteration with initial matrix $L_0$ a
symplectic matrix
and recursion given by factoring $L_{k-1} = R_k S_k $ using the
SR-factorization then taking $L_{k} := S_kR_k=S_k L_{k-1} S_k^{-1} =
R_k^{-1} L_{k-1} R_k $.  As each step is a similarity transform of the
previous step by both a matrix in the symplectic group and a matrix in
${G}_\mathfrak{k}$, $L_k$ is still a lower Hessenberg matrix
which is also symplectic, and therefore is still in the $H^S$-tridiagonal
 form of $L_0$
above.

If the algorithm is successful the
 $L_k$ approaches a family of block diagonal matrices with blocks
of the form:
\begin{enumerate}

\item[(a)] $2\times 2$ blocks containing two real eigenvalues $(z, -z)$,

\item[(b)] $2\times 2$ blocks containing two imaginary eigenvalues,
  $(z, -z)$,

\item[(c)]  $4 \times 4$ blocks containing a quadruple of complex
eigenvalues $(z, \bar{z}, -z, -\bar{z})$.

\end{enumerate}
In addition the blocks are sorted by the size of $\ln|z|$.
If there are complex eigenvalues the sequence does not converge to a
fixed matrix, rather it approaches the sorted diagonal shape.
If the $c_j=1$ for all $j$, then $L_k$ converges to a block
diagonal matrix with just $2\times 2$ blocks.  The rate of
convergence is $\mathcal{O}(k^3)$ for a dense set of $L_0$.  In
practice
 one runs the algorithm until $\sup_j | a_j(k)|$  at the $k$th step is less than some fixed
$\epsilon$ tolerance. The algorithm also works on the initial matrix
$h$ (i.e. without the change to a lower Hessenberg $L_0$).

There is a substantial literature on improvements to this basic
algorithm using implicit SR-factorization steps on certain matrix functions of
$L_k$ rather than just $L_k$ (see e.g. \cite{Fa00} and references therein).

We now show that the SR-algorithm is directly equivalent to the Pfaff lattice flows with a non-traditional Hamiltonian.  In light of the connection to the indefinite Toda lattice and HR-algorithms this is not a surprising fact and the proof is appropriately close to that for the analogous fact in the Toda cases.
\begin{Proposition}\label{pfaffalgorithm}
Let $L_0\in \mathfrak{sp}(n)$ in the form of (\ref{L}). Then
the SR-algorithm is equal to the integer evaluations of
the Pfaff lattice flow with respect to Hamiltonian $H(L) =
{\rm{tr}}\left( L \ln(L) - L \right)$ with $L(0)=L_0$.
\end{Proposition}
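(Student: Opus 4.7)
The plan mirrors the classical argument showing that the QR-algorithm arises as integer evaluations of the Toda flow generated by ${\rm tr}(L\ln L - L)$; here we adapt it to the SR/Pfaff setting. First I would compute $\nabla H$ using the standard trace identity $\frac{d}{d\epsilon}{\rm tr}(f(L+\epsilon X))\big|_{\epsilon=0}={\rm tr}(X f'(L))$ applied to $f(z)=z\ln z - z$. Since $f'(z)=\ln z$, this yields $\nabla H(L)=\ln L$, so that
\[\exp\bigl(t\,\nabla H(L_0)\bigr)=\exp(t\ln L_0)=L_0^{\,t},\]
and at integer times $t=k$ this collapses to simply $L_0^k$. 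By the factorization procedure (\ref{RSfactor})--(\ref{LRS}), the Pfaff flow value at $t=k$ is $L(k) = R(k)^{-1} L_0 R(k)$, where $L_0^k = R(k) S(k)$ is the SR-factorization with $R(k)\in G_{\mathfrak{k}}$, $S(k)\in Sp(n)$.

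Next I would match this against the SR-algorithm iterates $L_{j-1}=R_jS_j$, $L_j=S_jR_j$. Setting $\mathcal{R}_k:=R_1R_2\cdots R_k$ and $\mathcal{S}_k:=S_kS_{k-1}\cdots S_1$, unfolding $L_j=R_j^{-1}L_{j-1}R_j$ gives $L_k=\mathcal{R}_k^{-1}L_0\mathcal{R}_k$, which in particular yields the useful relation $L_0\mathcal{R}_{k-1}=\mathcal{R}_{k-1}L_{k-1}$. A quick induction on $k$ then shows $L_0^k = \mathcal{R}_k\mathcal{S}_k$: the base case $L_0 = R_1 S_1$ is the first SR-step by definition, and for the inductive step
\[L_0^k = L_0\cdot\mathcal{R}_{k-1}\mathcal{S}_{k-1} = \mathcal{R}_{k-1}L_{k-1}\mathcal{S}_{k-1} = \mathcal{R}_{k-1}R_k S_k\mathcal{S}_{k-1} = \mathcal{R}_k\mathcal{S}_k.\]
Because $G_{\mathfrak{k}}$ and $Sp(n)$ are subgroups, $\mathcal{R}_k\in G_{\mathfrak{k}}$ and $\mathcal{S}_k\in Sp(n)$, so this is a valid SR-factorization of the same type as the one produced by the Pfaff flow.

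Finally, I would invoke the uniqueness of the SR-factorization within the identity component of $G_{\mathfrak{k}}$ (recorded in the discussion following Theorem \ref{SRtheorem}), assuming each individual $R_j$ is chosen there, to identify $R(k)=\mathcal{R}_k$ and $S(k)=\mathcal{S}_k$. The conclusion is then immediate:
\[L(k) = R(k)^{-1} L_0 R(k) = \mathcal{R}_k^{-1} L_0 \mathcal{R}_k = L_k,\]
i.e.\ the $k$-th iterate of the SR-algorithm. The main subtlety I foresee is making sense of $\ln L_0$ when $L_0$ has non-positive real or complex eigenvalues; I would treat this formally by fixing a branch with $\exp(\ln L_0)=L_0$, which suffices because only the integer powers $L_0^k$ actually enter the comparison, and these are branch-independent.
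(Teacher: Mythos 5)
Your proposal is correct and follows essentially the same route as the paper: both establish $L_0^k=\mathcal{R}_k\mathcal{S}_k$ by induction using the relation $L_{k-1}=R_kS_k$ and then invoke uniqueness of the SR-factorization with $R$ in the identity component of $G_{\mathfrak{k}}$ (the paper phrases the induction in terms of the accumulated symplectic factors $S(k)=S_k\cdots S_1$ rather than the $\mathcal{R}_k$, a cosmetic difference). Your explicit computation of $\nabla H=\ln L$ and the remark on branch-independence of $L_0^k$ are minor additions the paper leaves implicit.
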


\begin{Proof} Recall that the SR-algorithm for the initial matrix $L_0$
is given by
\[ L_{k-1}=R_kS_k \qquad {\rm and} \qquad L_k:=S_kR_k\,.\]
One can see that
\[ L_k = S_k S_{k-1} \cdots S_2 S_1 L_0 S_1^{-1} S_2^{-1} \cdots
S_{k-1}^{-1} S_k^{-1} \,.
\]
While the Pfaff flows arise from (\ref{LRS}),
\[ L(t) = S(t) L_0 S(t)^{-1} \]
where
\[ \exp\left( t \ln(L_0)\right) = L_0^t = R(t) S(t)\,. \]
We want to show that $S(k) = S_k S_{k-1} \cdots S_2 S_1 $.  We
prove this by induction:   First we check that $S(1) = S_1$, from
which $L_1 = L(1)$.  Next we make the inductive hypothesis that
$S(k-1) = S_{k-1} S_{k-2} \cdots S_2 S_1 $ and then consider
\begin{eqnarray*}
R(k) S(k) &= \exp\left( k \ln(L_0)\right) = L_0^k = L_0^{k-1} L_0 \\
&= R(k-1) S(k-1) L_0 \\
&= R(k-1) L_{k-1} S(k-1) \\
&= R(k-1) R_k S_k S(k-1) \,.
\end{eqnarray*}
By uniqueness of SR-factorizations for $R$ in the identity component
of $G_{\mathfrak{k}}$,  we see that $S(k) = S_k S(k-1) =
S_k S_{k-1} \cdots S_2 S_1 $.
\end{Proof}

\addcontentsline{toc}{section}{Acknowledgments}
\ack

This work was supported by NSF grant DMS0806219, in addition V.P. acknowledges the partial support of NSF-VIGRE grant DMS-0135308.

\addcontentsline{toc}{section}{References}
\section*{References}

\end{document}